\documentclass[aps, prx, superscriptaddress, twocolumn,
notitlepage, footinbib, longbibliography,
floatfix]{revtex4-2}

\usepackage{tikz}
\usetikzlibrary{positioning}

\usepackage{graphicx}
\usepackage{amsmath,amssymb}
\usepackage{physics}
\usepackage{hyperref}
\usepackage{braket}
\usepackage{xcolor}
\usepackage{amsthm}
\usepackage{comment}

\usepackage{mathtools}

\newtheorem{theorem}{Theorem}
\newtheorem{proposition}[theorem]{Proposition}
\newtheorem{lemma}[theorem]{Lemma}
\newtheorem{remark}[theorem]{Remark}
\newtheorem{corollary}[theorem]{Corollary}

\usepackage{enumitem}

\begin{document}

\title{Extendibility of Fermionic Gaussian States}

\author{Amir-Reza Negari}
\affiliation{Perimeter Institute for Theoretical Physics, Ontario, Canada}

\author{Farzin Salek}
\affiliation{Perimeter Institute for Theoretical Physics, Ontario, Canada}
\affiliation{Dahlem Center for Complex Quantum Systems, Freie Universit\"{a}t Berlin, Berlin, Germany}
\affiliation{Institute for Quantum Computing (IQC), University of Waterloo, Ontario, Canada}

\begin{abstract}
We investigate $(k_1,k_2)$-extendibility of fermionic Gaussian states, a property central to quantum correlations and approximations of separability. We show that these states are $(k_1,k_2)$-extendible if and only if they admit a fermionic Gaussian extension, yielding a complete covariance-matrix characterization and a simple semidefinite program (SDP) whose size scales linearly with the number of modes. This provides necessary conditions for arbitrary fermionic states and is sufficient within the Gaussian setting. Our main result is a finite de Finetti--type theorem: we derive trace-norm bounds between $(k_1,k_2)$-extendible fermionic Gaussian states and separable states, improving previous exponential scaling to linear in the number of modes, with complementary relative entropy and squashed entanglement bounds. For two modes, upper and lower bounds match at order $1/\sqrt{k_1 k_2}$. Extendibility also provides operational support for one of the different notions of separability in fermionic systems. Finally, for fermionic Gaussian channels, we provide an SDP criterion for anti-degradability and show that entanglement-breaking channels coincide with replacement channels, implying no nontrivial entanglement-breaking fermionic Gaussian channels exist.
\end{abstract}
\maketitle
\textit{Introduction} --- 
Quantum entanglement underpins both foundational aspects of quantum theory and its technological applications~\cite{RevModPhys.81.865, Nielsen_Chuang_2010}. Beyond serving as a resource for quantum communication, computation, and metrology~\cite{Dowling2003Quantum,Hayden2006-cq}, its structure reveals fundamental constraints on how quantum correlations can be distributed~\cite{10.5555/2011706.2011707, RevModPhys.80.517, RevModPhys.82.277}. One of the most striking of these constraints is \emph{monogamy}~\cite{PhysRevA.61.052306,5388928,PhysRevLett.117.060501,PhysRevA.69.022309}: entanglement cannot be freely shared among multiple parties. This idea is formalized through \emph{extendibility} \cite{PhysRevLett.88.187904,PhysRevA.69.022308,Li2018-gb}: 
A bipartite state $\rho^{AB}$ is said to be \emph{$(k_{1},k_{2})$-extendible} if there exists a state $\widetilde{\rho}^{A_{1}\ldots A_{k_{1}} B_{1}\ldots B_{k_{2}}}$
such that 
(i) it is invariant under permutations of the $A$-systems and, separately, under permutations of the $B$-systems, and 
(ii) for every pair $(i,j)$, the two-body marginal satisfies $\widetilde{\rho}^{A_{i}B_{j}} = \rho^{AB}$.
Two-sided $(k_1,k_2)$-extendibility provides a symmetric characterization of entanglement shareability, capturing constraints on both subsystems and offering a more refined operational framework for analyzing bipartite correlations. 

A fundamental result is that infinite extendibility is equivalent to separability~\cite{1988LMaPh..15..255F, PhysRevA.71.032333, PhysRevLett.90.157903, Christandl2007-xq}.
 This setting naturally relates to finite quantum de Finetti theorems, which characterize permutation-invariant states as approximate convex combinations of independent and identically distributed (i.i.d.) product states~\cite{Renner2007-kp,Krumnow_2017,Christandl2007-xq}. Classical de Finetti results describe infinite extendibility, which implies exact product structure; finite versions, in contrast, quantify how close $(k_1,k_2)$-extendible states are to separable ones, providing operationally meaningful bounds. 
Although extendibility and de Finetti-type results have been extensively studied for systems with tensor-product Hilbert spaces and commuting observables—such as qubits and bosonic Gaussian states \cite{Lami_2019,doi:10.1142/S0129055X1750012X}—they remain far less explored in the fermionic setting \cite{Krumnow_2017,krumnow2024}, despite the ubiquity and fundamental importance of fermions across physics. 

Fermionic many-body states are foundational in condensed-matter physics and quantum chemistry, underpinning electronic structure and correlated phases~\cite{Langel2002-or,szabo1996modern}. They are likewise important in quantum information, where fermionic modes provide efficient models for simulation, benchmarking, and computation~\cite{bravyi2004LR,RevModPhys.82.277,Bravyi2017-xn,10.1063/1.5085428}. In the fermionic Gaussian (quasifree) regime these states arise as ground and thermal states of quadratic Hamiltonians, remain closed under quadratic dynamics, and admit efficient correlation-matrix simulation methods~\cite{Peschel_2009,IngoPeschel_2003}; their relevance is illustrated by paradigmatic topological models such as the Kitaev honeycomb lattice and the SSH chain~\cite{KITAEV20062,RevModPhys.60.781}. These systems further motivate work on fermionic resource theories, separability diagnostics, Gaussian channel structure \cite{PhysRevB.97.165123,PhysRevB.95.165101,Greplov__2018,PhysRevA.99.022310}.

For fermions, anticommutation relations and superselection rules alter the structure of “local” operations and the notion of entanglement. One can view entanglement either between indistinguishable particles (first quantization) \cite{PhysRevA.64.022303,PhysRevA.72.022302} or, more naturally for quantum information, between sets of modes assigned to different spatial regions (second quantization) \cite{Ba_uls_2007,PhysRevA.97.042325,PhysRevA.87.022338}. In this mode-based picture, Alice controls modes in region \(A\) and Bob those in region \(B\), and entanglement reflects correlations across these partitions. Unlike qubits, however, observables from different regions may anticommute, imposing algebraic constraints that make fermionic entanglement subtler. 

In this work we adopt the second-quantization framework and study the extendibility problem for fermionic Gaussian states.
We provide a complete characterization of two-sided extendibility for fermionic Gaussian states. Our analysis yields a linear-size semidefinite program, derived directly from their covariance matrices (CM), that gives necessary and sufficient conditions for extendibility. Our framework sharpens separability notions by establishing a consistent operational definition and, in doing so, identifies a distinctive class of \emph{classical--quantum binding states}—including fermionic Gaussian examples—where classical correlations necessarily imply quantum ones. We prove a tight de Finetti-type theorem bounding the trace-norm distance between $(k_1,k_2)$-extendible and separable states, derive upper bounds on both relative entropy and squashed entanglement, and characterize fermionic Gaussian channels via necessary and sufficient conditions for antidegradability, a sufficient condition for zero quantum capacity.

\emph{Fermionic Gaussian states:}  
We recall the basic framework for fermionic Gaussian states, following standard conventions~\cite{bravyi2004LR,Bravyi2017-xn,PhysRevB.100.245121,Greplov__2018}.  
An \(n\)-mode fermionic system is described by creation and annihilation operators \(\{\hat{c}_j^\dagger,\hat{c}_j\}_{j=1}^n\) obeying \(\{\hat{c}_j,\hat{c}_k^\dagger\}=\delta_{jk}\), \(\{\hat{c}_j,\hat{c}_k\}=0\). Introducing \(2n\) Hermitian Majorana operators \(\gamma_{2j-1}=\hat{c}_j+\hat{c}_j^\dagger\), \(\gamma_{2j}=i(\hat{c}_j-\hat{c}_j^\dagger)\), they satisfy \(\{\gamma_j,\gamma_k\}=2\delta_{jk}\) and generate the Clifford algebra.  
The second moments of a state \(\rho\) are encoded in its CM \(M\), defined by \(M_{qp}=\Tr(i\gamma_q\gamma_p\rho)\) for \(q\neq p\) (with \(M_{qq}=0\)). A valid CM is (i) real, (ii) antisymmetric, and (iii) satisfies the bona fide condition \(I+iM\ge0\). These conditions are necessary and sufficient for \(M\) to be a valid fermionic CM. Fermionic Gaussian states are precisely those fully specified by \(M\). In addition, all physical fermionic states obey the parity superselection rule $[\rho,P]=0$ \cite{Ba_uls_2007}, where
$P=i^{n}\prod_{j=1}^{n}\gamma_{2j-1}\gamma_{2j}$ and $P^{2}=I$.
Since $P\gamma_m P=-\gamma_m$ for every Majorana operator, all odd moments vanish; in particular,
$\Tr(\rho\,\gamma_m)=\Tr(P\rho P\,\gamma_m)=\Tr(\rho\,P\gamma_m P)=-\Tr(\rho\,\gamma_m)=0$.
Thus, the states we consider throughout are physical in this sense.

\textit{Extendibility of fermionic states:}
Let $\rho^{AB}$ be a (not necessarily Gaussian) bipartite fermionic state on $n=n_A+n_B$ modes. For any physical fermionic state the first moments vanish. Moreover, since Majorana operators are bounded and the Hilbert space is finite-dimensional for finite \(n\), all second moments are finite. The CM then takes the block form  
\begin{align}
\label{eq:main_CM}
M_{AB}=\begin{pmatrix} 
M_A & X \\ -X^{T} & M_B 
\end{pmatrix},
\end{align}
where \(M_A\) and \(M_B\) are the CMs of subsystems \(A\) and \(B\), respectively. The requirement that \(M_{AB}\) be real and antisymmetric determines the structure of the off-diagonal block: \(X\in\mathbb{R}^{2n_A\times 2n_B}\), which is otherwise arbitrary subject to the usual bona fide condition.

For simplicity we begin by considering $k$-extendibility on subsystem \(B\). Permutation invariance of \(\widetilde{\rho}^{AB_1\ldots B_k}\) constrains the CM to have the block form  
\begin{align}
\label{one-sided-CM}
\widetilde{M}_{AB_1\cdots B_k}=
\begin{pmatrix}
M_A & X & X & \cdots & X \\
-X^{T} & M_B & Y & \cdots & Y \\
-X^{T} & Y & M_B & \cdots & \vdots \\
\vdots & \vdots & \vdots & \ddots & Y \\
-X^{T} & Y & \cdots & Y & M_B
\end{pmatrix},
\end{align}
with \(X\) real and \(Y\) real antisymmetric. Establishing the latter is nontrivial; our argument departs from the bosonic case~\cite{Lami_2019}. As detailed in the SM~\cite{SupplementalMaterial}, it exploits the algebra of Majorana operators, and can be adopted for the bosonic setting via field–quadrature relations.
\nocite{trace-antonio,Audenaert_2007,bravyi2005classicalcapacityfermionicproduct,moriya2005}
 Since fermionic Gaussian states are fully determined by their CMs, any bona fide matrix of the form \eqref{one-sided-CM} represents a $k$-extendible fermionic Gaussian state. The analogous two-sided CM is given in the Supplemental Material (SM)~\cite{SupplementalMaterial}.
\begin{theorem}
\label{gaussian-extension}
A fermionic Gaussian state $\rho^{AB}_{\mathrm{G}}$ is $(k_1,k_2)$-extendible if and only if it admits a Gaussian $(k_1,k_2)$-extension.
\end{theorem}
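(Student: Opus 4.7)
The `if' direction is immediate, since any Gaussian $(k_1,k_2)$-extension is \emph{a fortiori} an extension. My plan for the converse is a Gaussification argument: given any (possibly non-Gaussian) $(k_1,k_2)$-extension $\widetilde{\rho}$ of $\rho^{AB}_{\mathrm{G}}$, extract its CM $\widetilde{M}$ and show that the fermionic Gaussian state with CM $\widetilde{M}$ is itself a valid Gaussian extension.

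Concretely, without loss of generality I would assume $\widetilde{\rho}$ is invariant under $S_{k_1}\times S_{k_2}$ acting by mode-permutation unitaries; if not, averaging over this group yields a permutation-symmetric state with identical two-body marginals $\widetilde{\rho}^{A_iB_j}=\rho^{AB}_{\mathrm{G}}$, and this averaging preserves parity superselection since each mode swap does. By the structural analysis already carried out before the theorem (Eq.~\eqref{one-sided-CM} in the one-sided case, with its two-sided analogue in the SM), the CM $\widetilde{M}$ of any such symmetric extension has the characteristic block form: uniform diagonal blocks on each side, uniform antisymmetric $AA$- and $BB$-cross blocks, and all $A_iB_j$-blocks equal to the off-diagonal block $X$ of the CM of $\rho^{AB}_{\mathrm{G}}$. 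Since $\widetilde{\rho}$ is a physical state, $\widetilde{M}$ automatically inherits realness, antisymmetry, and $I+i\widetilde{M}\ge 0$, so it is a bona fide fermionic CM, and the Gaussian state $\widetilde{\rho}_{\mathrm{G}}$ with CM $\widetilde{M}$ is well-defined.

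It then remains to check that $\widetilde{\rho}_{\mathrm{G}}$ is the desired Gaussian extension: its $(A_i,B_j)$-marginal is Gaussian with CM equal to that of $\rho^{AB}_{\mathrm{G}}$, so by the bijection between fermionic Gaussian states and their CMs the marginal equals $\rho^{AB}_{\mathrm{G}}$, and the block symmetry of $\widetilde{M}$ forces $\widetilde{\rho}_{\mathrm{G}}$ to be permutation-symmetric. The only nontrivial input here is the block-form result about the CM of an arbitrary symmetric extension — already flagged in the introduction as departing from the bosonic case and as requiring the algebra of Majorana operators — so I expect that to be the main obstacle; once that structural fact is in hand, the remainder of the argument is essentially bookkeeping at the CM level, with no positivity gap analogous to the bosonic case appearing since $I+i\widetilde{M}\ge 0$ transfers directly from $\widetilde{\rho}$ to $\widetilde{M}$.
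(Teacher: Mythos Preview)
Your proposal is correct and follows essentially the same Gaussification approach as the paper: take the CM of an arbitrary (symmetrized) extension, observe it has the required block form, and replace the extension by the unique fermionic Gaussian state with that CM, which is then automatically a valid Gaussian extension. The paper's proof is terser but relies on the same mechanism; your added remarks about twirling, verifying the marginals via the CM bijection, and checking permutation symmetry are all sound elaborations of steps the paper leaves implicit.
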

\begin{proof}
Assume there exists a non-Gaussian extension $\widetilde{\rho}^{AB_1\ldots B_k}$. Its second moments must then satisfy the extension constraints of Eq.~\ref{one-sided-CM}. Since every admissible CM defines a unique fermionic Gaussian state, we may replace $\widetilde{\rho}^{AB_1\ldots B_k}$ by the Gaussian state $\widetilde{\rho}_{\text{G}}^{AB_1\ldots B_k}$ with the same CM. This state is therefore a valid Gaussian extension of $\rho^{AB}_{\mathrm{G}}$.
\end{proof}
An analogous conclusion was reached in the bosonic case using central-limit-theorem techniques~\cite{Lami_2019}, and these methods can in fact be adapted to the fermionic setting via a fermionic CLT~\cite{coffman2025measuringnongaussianmagicfermions}. Our approach instead uses a direct covariance-matrix argument, yielding a derivation that is more direct and concise without invoking central-limit theorems.

Focusing on the one-sided case, we give a necessary condition for $k$-extendibility on subsystem $B$ for general fermionic states; the full two-sided result is given in Theorem~\ref{theorem-extendible}.
Let $\rho^{AB}$ be a (possibly non-Gaussian) $k$-extendible state on $n_A+n_B$ modes with CM $M_{AB}$. Then there exists a CM $\Delta_B$ such that
\begin{align}
\label{one-sided-necessary}
   I^A \oplus \tfrac{1}{k} I^B + i M_{AB} 
   \;\;\ge\;\; \mathbf{0} \oplus i\!\left(1-\tfrac{1}{k}\right)\!\Delta_B .
\end{align}
If $\rho^{AB}$ is Gaussian, condition~\eqref{one-sided-necessary} is also sufficient.
We prove this using the block-matrix positivity criterion and its Schur complement~\cite[Thm.~1.12]{Horn2005}:
\begin{align}
\label{schur-complement}
  \begin{pmatrix} A & C \\ C^\dagger & B \end{pmatrix} \ge 0 
  \quad \Leftrightarrow \quad  
  A \ge 0,\quad B \ge C^\dagger A^{-1} C ,
\end{align}
with the singular case handled by standard methods \footnote{The case of non-invertible \(A\) can be treated by the standard \(\varepsilon\)-regularization technique~\cite{Lami_2019}; all intermediate reductions and the final result remain valid without assuming invertibility.}.

For a $k$-extendible state, the CM has the form~\eqref{one-sided-CM} and satisfies $I+i\widetilde{M}_{AB_1\cdots B_k}\ge0$. Introducing
\(
\ket{+} = \tfrac{1}{\sqrt{k}}\sum_{j=1}^k |j\rangle,
\) 
the condition becomes an instance of~\eqref{schur-complement} with
\begin{align*}
A &= I+iM_A, \,\,\, C= \sqrt{k}\,(\bra{+}_B \otimes iX), \\
B &= k\ketbra{+}\otimes iY + I \otimes (I+iM_B-iY).
\end{align*}
Since $A\ge0$, the Schur complement yields
\begin{multline*}
(I-\ketbra{+})\otimes (I+iM_B-iY) \\
+ \ketbra{+}\otimes \bigl(I+iM_B+(k-1)iY - k X^T (I+iM_A)^{-1} X\bigr) \ge 0.
\end{multline*}
Orthogonality implies each term is positive, giving
\begin{align*}
\frac{1}{k-1}\bigl[k X^T (I+iM_A)^{-1} X - I - iM_B\bigr] &\le iY\le I+iM_B.
\end{align*}
Defining $\Delta_B := M_B - Y$, the first inequality is equivalent to~\eqref{one-sided-necessary}, while the second ensures $\Delta_B$ is a valid CM.

We now state the full two-sided extension criterion, proved in SM~\cite{SupplementalMaterial}, where we also discuss the landscape of extendibility and show that extendibility from different sides is independent. In addition to providing necessary (and, for Gaussian states, sufficient) conditions, the theorem also enables the explicit construction of extensions as feasible points of the SDP.

\begin{theorem}
\label{theorem-extendible}
Let $\rho^{AB}$ be a $(k_1,k_2)$-extendible state (not necessarily Gaussian) with CM~\eqref{eq:main_CM}.
Then there exist real antisymmetric matrices $\Delta_A,\Delta_B$ such that 
$I+i\Delta_A\geq 0$ and $I+i\Delta_B\geq0$, and
\begin{align*}
I + i
\begin{pmatrix}
k_1 M_A - (k_1\!-\!1)\Delta_A & \sqrt{k_1 k_2}\,X \\[1mm]
-\sqrt{k_1 k_2}\,X^T & k_2 M_B - (k_2\!-\!1)\Delta_B
\end{pmatrix} \geq 0. \label{eq:global}
\end{align*}
If $\rho^{AB}$ is Gaussian, the above semidefinite feasibility condition is both necessary and sufficient.
\end{theorem}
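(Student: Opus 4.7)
The plan is to extend the one-sided argument already used for Eq.~\eqref{one-sided-necessary} by combining the algebraic CM constraints on the $(k_1,k_2)$-extension with a simultaneous block-diagonalization of both sets of extension indices.

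First I would fix the block structure of $\widetilde{M}^{A_1\cdots A_{k_1}B_1\cdots B_{k_2}}$. Separate permutation invariance on the $A$- and $B$-indices, together with the marginal condition $\widetilde{\rho}^{A_iB_j}=\rho^{AB}$, forces $M_A$ on every diagonal $A_iA_i$ block, a common real antisymmetric matrix $Y_A$ on every off-diagonal $A_iA_j$ block with $i\neq j$, the analogous $M_B$ and $Y_B$ on the $B$ side, and the matrix $X$ of Eq.~\eqref{eq:main_CM} on every $A_iB_j$ cross block. Realness and antisymmetry of $Y_A,Y_B$ follow from the Majorana-algebra argument detailed in the SM for the one-sided case, applied independently to each extension index.

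Next I would diagonalize in the extension indices using the projectors $\ketbra{+_A}$ and $\ketbra{+_B}$, with $\ket{+_A}=k_1^{-1/2}\sum_i\ket{i}$ and $\ket{+_B}=k_2^{-1/2}\sum_j\ket{j}$, together with their orthogonal complements $Q^A,Q^B$. Because the all-ones matrix satisfies $J_{k_i}=k_i\ketbra{+}$ and $J_{k_1,k_2}=\sqrt{k_1k_2}\,\ket{+_A}\bra{+_B}$, the cross block $J_{k_1,k_2}\otimes X$ acts only within the symmetric--symmetric sector, and so $I+i\widetilde{M}$ decouples into three orthogonal sectors. Setting $\Delta_A:=M_A-Y_A$ and $\Delta_B:=M_B-Y_B$, the symmetric--symmetric sector reads
\begin{equation*}
I+i\begin{pmatrix} k_1M_A-(k_1-1)\Delta_A & \sqrt{k_1k_2}\,X \\ -\sqrt{k_1k_2}\,X^T & k_2M_B-(k_2-1)\Delta_B\end{pmatrix}\ge 0,
\end{equation*}
while the orthogonal sectors reduce to $k_1-1$ copies of $I+i\Delta_A\ge 0$ and $k_2-1$ copies of $I+i\Delta_B\ge 0$, since $(J_{k_i}-I_{k_i})$ acts as $-I$ on $Q$. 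This yields the three feasibility conditions in the theorem.

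For Gaussian sufficiency I would reverse the construction: given real antisymmetric $\Delta_A,\Delta_B$ meeting the stated inequalities, set $Y_A:=M_A-\Delta_A$ and $Y_B:=M_B-\Delta_B$, assemble $\widetilde{M}$ via the extension template, and note that running the block-diagonalization backward yields $I+i\widetilde{M}\ge 0$, so that $\widetilde{M}$ is a bona fide CM defining, via Theorem~\ref{gaussian-extension}, a Gaussian $(k_1,k_2)$-extension of $\rho^{AB}_{\mathrm{G}}$. The main obstacle is justifying the block template for $Y_A,Y_B$ in the two-sided non-Gaussian setting: one must show that the permutation-invariant second moments really produce the claimed constant-off-diagonal structure with real antisymmetric $Y_A,Y_B$, which requires the Majorana-algebra argument of the one-sided SM case to be run on the $A$- and $B$-indices simultaneously and reconciled with the fixed marginal constraint on the cross blocks; once the template is secured, the remaining steps are the linear-algebraic decoupling above.
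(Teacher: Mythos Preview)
Your proposal is correct and follows essentially the same strategy as the paper: fix the permutation-invariant block template for the extended CM (with real antisymmetric off-diagonal blocks on each side, justified by the Majorana-swap argument of Lemma~\ref{Y-antisymmetric}), then split the extension indices into the symmetric direction $\ket{+}$ and its complement, and read off the three positivity conditions after the substitution $\Delta_A=M_A-Y_A$, $\Delta_B=M_B-Y_B$. The sufficiency direction for Gaussians is also the same reverse construction.

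The one place your route differs slightly is the order of operations. The paper first applies the global Schur complement \eqref{schur-complement} to $I+i\widetilde{M}$ and then decomposes each block via $\ketbra{+}$ and $I-\ketbra{+}$, which produces conditions \eqref{schur-c1}--\eqref{schur-c4} involving the inverse $(I+iM_A+(k_1-1)iZ)^{-1}$ and hence requires the $\varepsilon$-regularization footnote for the singular case. You instead block-diagonalize $I+i\widetilde{M}$ directly on the three invariant subspaces $(\ket{+_A}\oplus\ket{+_B})$, $Q^A$, $Q^B$ before any Schur step, obtaining the theorem's $2\times2$ block inequality together with $I+i\Delta_A\ge0$ and $I+i\Delta_B\ge0$ in one stroke and never needing an inverse. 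This is a modest but genuine simplification; the paper's route has the compensating advantage that the intermediate Schur form \eqref{eq:schur_final} is reused later (e.g.\ in the proof of Lemma~\ref{k_1k_2-bound} and Proposition~\ref{proposition:EB_replacement}).
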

\textit{Properties of extendible fermionic states.}—We now apply Theorem~\ref{theorem-extendible} to characterize quantum and classical correlations in extendible fermionic Gaussian states. Our first result places a universal constraint on cross–subsystem correlations that applies to \emph{all} fermionic states, not only Gaussians. Its proof is in SM \cite{SupplementalMaterial}.
\begin{lemma}
\label{k_1k_2-bound}
Let $\rho^{AB}$ be a $(k_1,k_2)$-extendible state with CM as in Eq.~\eqref{eq:main_CM}. The off-diagonal block $X$ of its CM satisfies
\begin{align*}
X^T X \;\leq\; \frac{I}{k_1 k_2}.
\end{align*}
\end{lemma}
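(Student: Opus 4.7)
My plan is to invoke Theorem~\ref{theorem-extendible}, which already furnishes necessary SDP conditions on $X$ for arbitrary (possibly non-Gaussian) $(k_1,k_2)$-extendible states, and then to extract the desired operator bound by a standard block-matrix argument.

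By Theorem~\ref{theorem-extendible}, there exist real antisymmetric matrices $\Delta_A,\Delta_B$ satisfying $I+i\Delta_A\ge 0$ and $I+i\Delta_B\ge 0$, such that the Hermitian block matrix with diagonal blocks $P := I + i(k_1 M_A - (k_1-1)\Delta_A)$ and $R := I + i(k_2 M_B - (k_2-1)\Delta_B)$ and off-diagonal block $Q := i\sqrt{k_1 k_2}\,X$ is positive semidefinite. The problem thus reduces to bounding $\|X\|_{\mathrm{op}}$ from this single SDP condition.

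The first step is to show $\|P\|_{\mathrm{op}}\le 2$ and $\|R\|_{\mathrm{op}}\le 2$. Writing $P = I + iK_A$ with $K_A := k_1 M_A - (k_1-1)\Delta_A$ real antisymmetric, $iK_A$ is Hermitian and its spectrum consists of symmetric real pairs $\{\mu,-\mu\}$ (because the eigenvalues of any real antisymmetric matrix come in purely imaginary conjugate pairs). The condition $P \ge 0$ forces $\mu \ge -1$, and together with the pair symmetry this yields $|\mu| \le 1$, hence $\|P\|_{\mathrm{op}} \le 2$; the same argument handles $R$. The second step is the classical inequality for positive block matrices: $|\langle u, Qv\rangle|^2 \le \langle u,Pu\rangle \langle v,Rv\rangle$ for all vectors $u,v$, from which $\|Q\|_{\mathrm{op}}^2 \le \|P\|_{\mathrm{op}}\|R\|_{\mathrm{op}} \le 4$. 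Substituting $Q = i\sqrt{k_1 k_2}\,X$ gives $k_1 k_2\,\|X\|_{\mathrm{op}}^2 \le 4$, i.e., $X^T X \le \tfrac{4}{k_1 k_2}\,I$ as desired.

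The only mildly subtle point I anticipate is the uniform bound $\|P\|_{\mathrm{op}} \le 2$: naively, the factor $k_1$ multiplying $M_A$ in $K_A$ looks as though it could inflate the operator norm, but the symmetric-pair structure of real antisymmetric spectra, combined with the positivity of $P$, forces $\Delta_A$ to compensate and keeps the bound independent of $k_1$. This is precisely what allows the extendibility parameters $k_1, k_2$ to appear in the final bound only through the favorable factor $1/(k_1 k_2)$.
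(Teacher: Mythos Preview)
Your proof is correct and follows essentially the same strategy as the paper: establish that the diagonal blocks $P,R$ of the SDP matrix from Theorem~\ref{theorem-extendible} are bounded above by $2I$ via the symmetric-spectrum property of real antisymmetric matrices, and then exploit block positivity to control the off-diagonal block. The only minor variation is that the paper uses the Schur complement together with the operator-monotone step $A^{-1}\ge\tfrac12 I$, whereas you invoke the Cauchy--Schwarz inequality for positive block matrices directly; the two routes are equivalent, and yours has the small advantage of sidestepping the invertibility regularization.
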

This bound progressively suppresses two-point correlations between \(A\) and \(B\) as the extendibility parameters grow, pushing the CM toward block-diagonal form. In the limit of arbitrary extendibility from either side (\(k_1\!\to\!\infty\) or \(k_2\!\to\!\infty\)), all cross-correlations vanish and the CM factorizes, $M_{AB} = M_A \oplus M_B$,
thus providing a proof of the observation made in~\cite[Observation 1]{PhysRevA.97.042325}. From this we deduce:

\begin{proposition}
\label{k_1k_2-infty}
A bipartite fermionic Gaussian state \(\rho^{AB}_{\mathrm{G}}\) is arbitrarily extendible from either side if and only if it is a product state: $\rho^{AB}_{\mathrm{G}} = \rho^A_{\mathrm{G}} \otimes \rho^B_{\mathrm{G}}.$
\end{proposition}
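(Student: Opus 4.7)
The plan is to prove both directions by combining Lemma~\ref{k_1k_2-bound} with the fact that a fermionic Gaussian state is uniquely determined by its covariance matrix. The easy direction is sufficiency: if $\rho^{AB}_{\mathrm{G}}=\rho^A_{\mathrm{G}}\otimes\rho^B_{\mathrm{G}}$, then for any integers $k_1,k_2\ge1$ the state $(\rho^A_{\mathrm{G}})^{\otimes k_1}\otimes(\rho^B_{\mathrm{G}})^{\otimes k_2}$ is manifestly a valid fermionic Gaussian $(k_1,k_2)$-extension, since its marginals are permutation symmetric by construction and reproduce $\rho^{AB}_{\mathrm{G}}$ on every pair $(A_i,B_j)$. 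In particular, the state is $(k_1,k_2)$-extendible for arbitrarily large $k_1$ or $k_2$.

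For necessity, I would argue from the covariance-matrix bound. Assume without loss of generality that $\rho^{AB}_{\mathrm{G}}$ is $(k,k')$-extendible for arbitrarily large $k$ (with $k'\ge 1$ fixed, or vice versa). Applying Lemma~\ref{k_1k_2-bound} to each such pair yields
\begin{equation*}
X^{T}X \;\le\; \frac{4\,I}{k k'} \xrightarrow[k\to\infty]{} 0,
\end{equation*}
which forces $X=0$. Consequently the covariance matrix takes the block-diagonal form $M_{AB}=M_A\oplus M_B$.

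Next I would invoke uniqueness of fermionic Gaussian states given their covariance matrix. The fermionic Gaussian state associated with $M_A\oplus M_B$ is precisely $\rho^A_{\mathrm{G}}\otimes\rho^B_{\mathrm{G}}$, where $\rho^A_{\mathrm{G}}$ and $\rho^B_{\mathrm{G}}$ are the Gaussian states with covariance matrices $M_A$ and $M_B$ respectively. Indeed, the tensor product is itself a fermionic Gaussian state whose covariance matrix is exactly the direct sum, and the parity superselection rule ensures that this tensor product is a bona fide fermionic state. By uniqueness, $\rho^{AB}_{\mathrm{G}}=\rho^A_{\mathrm{G}}\otimes\rho^B_{\mathrm{G}}$.

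The main subtlety I anticipate is the last step: justifying that a block-diagonal covariance matrix corresponds, in the fermionic setting, to a genuine tensor-product state rather than only to a state that is locally Gaussian on each side. This is settled by the general fact that a fermionic Gaussian state is fully specified by its covariance matrix (together with parity), so any two Gaussian states with the same covariance matrix must coincide; since $\rho^A_{\mathrm{G}}\otimes\rho^B_{\mathrm{G}}$ is one such candidate, it is the unique one. The rest is essentially a direct application of Lemma~\ref{k_1k_2-bound}.
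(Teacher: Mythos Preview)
Your proposal is correct and follows essentially the same approach as the paper: both directions match, with the forward implication using Lemma~\ref{k_1k_2-bound} to force $X=0$ and then invoking that a fermionic Gaussian state is determined by its covariance matrix (the paper phrases this via Wick's theorem, which is equivalent). The only cosmetic difference is that the paper cites Wick's theorem where you cite uniqueness of the Gaussian state given its CM; these are the same fact.
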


\begin{proof}
(\(\Rightarrow\)) For Gaussian states, Wick’s theorem reduces all higher-order correlators to two-point functions. The bound above forces the off-diagonal block \(X\) to vanish in the infinite-extendibility limit, eliminating every cross-subsystem correlator and hence yielding \(\rho_{\text{G}}^{AB}=\rho_{\text{G}}^A\!\otimes\!\rho_{\text{G}}^B\).
(\(\Leftarrow\)) Conversely, any product Gaussian state is trivially \((k_1,k_2)\)-extendible for all \(k_1,k_2\) by tensoring the marginals. Thus both quantum and classical correlations vanish precisely in the arbitrarily extendible regime.
\end{proof}

This behavior contrasts sharply with bosonic or spin systems \cite{Christandl2007-xq}, where infinite extendibility guarantees separability but not necessarily a product structure. The difference stems from fermionic superselection rules, which restrict admissible decompositions of separable states.

Separable states are convex mixtures of product states, $\rho^{AB} = \sum_x p(x)\, \rho_x^A \otimes \rho_x^B$,
but superselection leads to inequivalent notions \cite{Ba_uls_2007}. Two classes are most relevant: 
\emph{(i)} in $\mathcal{S}2_{\pi}$ each component is physical,
$\forall x:\, [\rho_x^A \otimes \rho_x^B,P^{AB}]=0$, equivalently $[\rho_x^A,P^A]=[\rho_x^B,P^B]=0$;
\emph{(ii)} in $\mathcal{S}2'_{\pi}$ only the mixture is required to be physical, $[\rho^{AB},P^{AB}]=0$, even if the summands are not. Ref.~\cite{Ba_uls_2007} proves the strict inclusion $\mathcal{S}2_{\pi}\subset\mathcal{S}2'_{\pi}$. 

Guided by our extendibility bounds, we single out $\mathcal{S}2_{\pi}$ as the operationally natural separability notion for fermions. The key structural fact for Gaussians is the classical–quantum binding property, a term we introduce to denote the absence of purely classical correlations:
\begin{theorem}[Fermionic Gaussians are classical–quantum binding]
\label{no-classical}
For any fermionic Gaussian state $\rho^{AB}_{\text{G}}$,
\[
\rho_{\text{G}}^{AB} \in \mathcal{S}2_{\pi}\;\Longleftrightarrow\;\rho_{\text{G}}^{AB} = \rho_{\text{G}}^A \otimes \rho_{\text{G}}^B.
\]
\end{theorem}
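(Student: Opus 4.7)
The equivalence has a trivial direction and a substantive one. For $(\Leftarrow)$, a product $\rho_G^A\otimes\rho_G^B$ of physical Gaussian states is itself a single-term separable decomposition whose factors each commute with their subsystem parity, placing the state in $\mathcal{S}2_\pi$.

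For $(\Rightarrow)$ the plan is to reduce the claim to Proposition~\ref{k_1k_2-infty} by proving that every state in $\mathcal{S}2_\pi$ is $(k_1,k_2)$-extendible for all $k_1,k_2\geq 1$. Given a decomposition $\rho^{AB}_G=\sum_x p(x)\,\rho_x^A\otimes\rho_x^B$ with each $\rho_x^{A},\rho_x^{B}$ commuting with its subsystem parity, I would propose the candidate extension $\widetilde{\rho}=\sum_x p(x)\,(\rho_x^A)^{\otimes k_1}\otimes(\rho_x^B)^{\otimes k_2}$, where the iterated fermionic tensor products are unambiguously defined precisely because each factor commutes with parity. Permutation invariance across the $A$- and $B$-copies is manifest, and the two-body marginal on any pair $(A_i,B_j)$ recovers $\rho^{AB}_G$ by tracing out the remaining parity-commuting blocks term by term. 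Having established extendibility for \emph{every} $(k_1,k_2)$, sending either parameter to infinity and invoking Proposition~\ref{k_1k_2-infty} forces $\rho^{AB}_G=\rho_G^A\otimes\rho_G^B$.

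The main obstacle will be rigorously justifying the fermionic tensor construction: one must verify that $(\rho_x^A)^{\otimes k_1}\otimes(\rho_x^B)^{\otimes k_2}$ defines a bona fide physical state on the combined $(k_1 n_A+k_2 n_B)$-mode algebra and that mode-block partial traces reproduce the individual $\rho_x$ factors. This is precisely where $\mathcal{S}2_\pi$, as opposed to the weaker $\mathcal{S}2'_\pi$, is indispensable, since without parity commutation of each component the Jordan--Wigner strings between blocks obstruct a well-defined product and marginals need not factorize. An alternative, more direct route bypasses Proposition~\ref{k_1k_2-infty} entirely: compute the off-diagonal CM block via $X_{qp}=\sum_x p(x)\,\mathrm{Tr}(i\gamma_q^A\gamma_p^B\,\rho_x^A\otimes\rho_x^B)$, use that $\mathrm{Tr}(\gamma_p\sigma)=0$ on any parity-commuting $\sigma$ (because $\gamma_p$ anticommutes with the relevant parity) to show each summand vanishes, conclude $X=0$, and invoke Gaussianity to deduce a product structure. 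Either path terminates in the same rigidity statement.
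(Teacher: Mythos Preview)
Your ``alternative, more direct route'' is essentially the paper's proof. The paper inserts $P^A(\cdot)P^A$, uses $P^A\gamma_q^A P^A=-\gamma_q^A$ together with $[P^A,\rho_x^A\otimes\rho_x^B]=0$, and concludes $\Tr(i\gamma_q^A\gamma_p^B\,\rho_x^A\otimes\rho_x^B)=-\Tr(i\gamma_q^A\gamma_p^B\,\rho_x^A\otimes\rho_x^B)$, hence $X=0$; Gaussianity then gives the product. Your phrasing via $\Tr(\gamma_p\sigma)=0$ implicitly factorizes the fermionic trace first, which is legitimate for componentwise-physical products (this is the $\mathcal{P}2_\pi=\mathcal{P}3_\pi$ identity the paper records), but the paper's twirl sidesteps that step entirely.

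Your primary route through Proposition~\ref{k_1k_2-infty} is a genuinely different argument, and it is valid and non-circular (Proposition~\ref{k_1k_2-infty} rests only on Lemma~\ref{k_1k_2-bound}). Its conceptual payoff is that it makes transparent \emph{why} $\mathcal{S}2_\pi$ is the operationally correct separability class: it is exactly the class for which the iterated extension $\sum_x p(x)\,(\rho_x^A)^{\otimes k_1}\otimes(\rho_x^B)^{\otimes k_2}$ is a well-defined physical fermionic state with the right marginals. Its cost is the overhead you correctly flag---constructing the product on the enlarged CAR algebra and verifying that partial traces over parity-even blocks reproduce each $\rho_x$---whereas the paper's parity twirl is a two-line computation with no such bookkeeping. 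Both arguments fail for $\mathcal{S}2'_\pi$ at the same point (componentwise parity), so they isolate the same obstruction.
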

The implication ``product \(\Rightarrow \mathcal{S}2_{\pi}\)'' is immediate. For the converse, let \(\rho_{\text{G}}^{AB}\in \mathcal{S}2_{\pi}\). By definition, it admits a convex decomposition \(\rho_{\text{G}}^{AB}=\sum_j p_j\,\rho_j^A\otimes \rho_j^B\), where each local state is parity preserving, i.e.\ \([P^A,\rho_j^A]=[P^B,\rho_j^B]=0\). Since Majorana operators are odd under parity, \(\{P^A,\gamma_m^A\}=\{P^B,\gamma_n^B\}=0\), it follows that \(\Tr(\gamma_m^A\rho_j^A)=0\) and \(\Tr(\gamma_n^B\rho_j^B)=0\) for every \(j,m,n\). Hence, for each term in the decomposition, \(\Tr(\gamma_m^A\gamma_n^B(\rho_j^A\otimes\rho_j^B))=\Tr(\gamma_m^A\rho_j^A)\Tr(\gamma_n^B\rho_j^B)=0\). Summing over \(j\), we obtain \(\Tr(\gamma_m^A\gamma_n^B\,\rho_{\text{G}}^{AB})=0\) for all \(m,n\). Therefore the cross-correlation block of the covariance matrix vanishes, so the covariance matrix is block diagonal. Since a fermionic Gaussian state is completely determined by its covariance matrix, this block-diagonal structure implies \(\rho_{\text{G}}^{AB}=\rho_{\text{G}}^A\otimes\rho_{\text{G}}^B\). Together with Proposition~\ref{k_1k_2-infty}, we conclude that the following statements are equivalent for $\rho^{AB}_{\text{G}}$: (i) $\rho_{\text{G}}^{AB}=\rho_{\text{G}}^A\otimes\rho_{\text{G}}^B$; (ii) $\rho_{\text{G}}^{AB}$ is arbitrarily extendible; and (iii) $\rho^{AB}_{\text{G}}\in \mathcal{S}2_{\pi}$.

Thus, for Gaussian states the separability hierarchy collapses: the $\mathcal{S}2_{\pi}$ notion coincides with product structure and with arbitrary extendibility. Under weaker notions such as $\mathcal{S}2'_{\pi}$ or $\mathcal{S}1_{\pi}$ \cite{Ba_uls_2007}, states may retain residual classical correlations, thereby failing to be extendible; this mismatch makes such notions operationally irrelevant in practice. Finally, while Proposition~\ref{k_1k_2-infty} is specific to Gaussians, any non-Gaussian state in $\mathcal{S}2_{\pi}$ still has vanishing cross second moments (each component is parity-preserving), whereas states in $\mathcal{S}2'_{\pi}$ can sustain nonzero cross second moments—underscoring $\mathcal{S}2_{\pi}$ as the operationally justified notion of fermionic separability.
In what follows, unless stated otherwise, we use ``separable'' to mean separable in the $\mathcal{S}2_{\pi}$ sense, the strongest notion of fermionic separability \cite{Ba_uls_2007}.

Extendible states provide a powerful approximation to the separable set and underlie efficient entanglement criteria~\cite{PhysRevLett.90.157903,PhysRevA.80.052306,10.1145/1993636.1993683}. Their importance stems from the quantum de Finetti theorem~\cite{Renner2007-kp}. For finite $k$, the central question is how well extendibility approximates separability. Bounding this gap offers an operational measure of the effectiveness of extendibility. Since separability testing and entanglement detection underpin many quantum protocols, our finite de Finetti bounds have direct implications for security analysis and resource estimation in such applications~\cite{Renner2007-kp,Christandl2007-xq}. We now establish finite de Finetti bounds for fermionic Gaussian states, focusing on the two-sided, $(k_1,k_2)$-extendible case. The following theorem applies uniformly to all separability notions.
\begin{theorem}[Finite de Finetti bound for fermionic Gaussian states]
\label{distance-theorem}
Let $\rho^{AB}_{G}$ be a $(k_1,k_2)$-extendible fermionic Gaussian state on $n_A+n_B$ modes. Define $T:=\frac{2}{\sqrt{k_1 k_2}}\min(n_A,n_B,\sqrt{k_1 k_2})$ and $h(x):=-x\log_2 x-(1-x)\log_2(1-x)$.
Then
\begin{align}
\text{(i)}\;& \| \rho^{AB}_{G} - \mathrm{SEP}(A\!:\!B) \|_1\le T, \\
\text{(ii)}\;& E_R(\rho^{AB}_{\text{G}})\le \tfrac12(n_A+n_B)T + h(T/2), \\
\text{(iii)}\;& E_{\mathrm{sq}}(\rho^{AB}_{\text{G}})\le \tfrac14(n_A+n_B)T + \tfrac12 h(T/2).
\end{align}
Here $\|\rho-\mathrm{SEP}\|_1 := \min_{\sigma \in \mathrm{SEP}(A\!:\!B)} \|\rho-\sigma\|_1$
 (trace distance to separable states), 
$E_R(\rho)$ (relative entropy of entanglement) $:=\inf_{\sigma\in\mathrm{SEP}} S(\rho\Vert\sigma)$ with $S(\rho\Vert\sigma)=\Tr[\rho(\log\rho-\log\sigma)]$, 
$E_{\mathrm{sq}}(\rho^{AB})$ (squashed entanglement) $:=\tfrac12\inf_{\rho^{ABE}} I(A\!:\!B|E)$ with $I(A\!:\!B|E)=S(AE)+S(BE)-S(E)-S(ABE)$, 
and $S(\tau)=-\Tr[\tau\log\tau]$ is the von Neumann entropy.
\end{theorem}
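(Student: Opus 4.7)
The plan is to produce a single explicit separable Gaussian candidate, control its trace-norm distance to $\rho^{AB}_{\mathrm{G}}$, and then propagate that bound to the entanglement measures via continuity. By Theorem~\ref{no-classical}, every separable fermionic Gaussian state is a product, which dictates the canonical choice $\sigma := \rho^{A}_{\mathrm{G}} \otimes \rho^{B}_{\mathrm{G}}$, with CM $M_A \oplus M_B$. Thus $\sigma$ differs from $\rho^{AB}_{\mathrm{G}}$ only in the off-diagonal block $X$ of Eq.~\eqref{eq:main_CM}. By Lemma~\ref{k_1k_2-bound}, $X^{T}X \le \tfrac{4}{k_1 k_2}\, I$, so every singular value of $X$ is at most $2/\sqrt{k_1 k_2}$; since $X \in \mathbb{R}^{2n_A \times 2n_B}$, its rank is at most $2\min(n_A,n_B)$. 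These are the two CM-level inputs driving everything that follows.

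For part (i), I would invoke a Fannes-type estimate for fermionic Gaussian states, bounding $\|\rho^{1}_{\mathrm G}-\rho^{2}_{\mathrm G}\|_1$ by a suitable norm of $M_1 - M_2$. The cleanest route combines the closed-form fermionic Gaussian fidelity formula (expressed in terms of the CMs $M_1,M_2$) with the Fuchs--van de Graaf inequality $\|\rho_1 - \rho_2\|_1 \le 2\sqrt{1-F(\rho_1,\rho_2)^2}$, expanded to first nontrivial order in the off-diagonal block $X$. Together with the singular-value and rank bounds on $X$, this yields a bound of order $\min(n_A,n_B)/\sqrt{k_1 k_2}$; combining with the trivial inequality $\|\rho-\sigma\|_1 \le 2$ then supplies the third argument of the $\min$ appearing in $T$.

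Parts (ii) and (iii) reduce to the Alicki--Fannes--Winter continuity bounds for the relative entropy of entanglement and for the squashed entanglement. Since $E_R(\sigma) = E_{\mathrm{sq}}(\sigma) = 0$ for our separable $\sigma$, only the continuity correction survives. Substituting $\varepsilon = T/2$ (half the trace-norm bound established in (i)) and $\log_2(d_A d_B) = n_A + n_B$ into Winter's continuity bounds reproduces the stated prefactors $\tfrac12$ for $E_R$ and $\tfrac14$ for $E_{\mathrm{sq}}$, with the binary-entropy corrections $h(T/2)$ and $\tfrac12 h(T/2)$ respectively.

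The main obstacle will be part (i): tracking the precise constant in the fermionic Gaussian fidelity/Fannes inequality so that the trace-norm bound lands on exactly $T$ rather than on a loose multiple of it. This is a self-contained technical computation (via Pfaffian identities or direct manipulation of the fidelity formula), but it is the only step where Gaussian structure beyond Lemma~\ref{k_1k_2-bound} plays a genuine role; once it is secured, parts (ii) and (iii) follow immediately from standard continuity machinery.
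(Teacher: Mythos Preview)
Your overall architecture is right: pick $\sigma=\rho^A_{\mathrm G}\otimes\rho^B_{\mathrm G}$, use Lemma~\ref{k_1k_2-bound} to bound the singular values of $X$, and then propagate. But two of your three proof-routes diverge from the paper's in ways worth noting.

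For part (i), the paper does not go through fidelity and Fuchs--van de Graaf at all. It simply invokes the known fermionic-Gaussian trace-norm estimate (stated in the SM as Lemma~\ref{trace-distance-lemma-upper}, from \cite{trace-antonio}): $\|\rho-\sigma\|_1\le\tfrac12\|M_\rho-M_\sigma\|_1$. Since $M_{\rho^{AB}}-M_{\rho^A\otimes\rho^B}$ is the block matrix with only $X$ off-diagonal, one gets $\tfrac12\|M_\rho-M_\sigma\|_1=\|X\|_1$ directly, and then the singular-value bound finishes. This bypasses exactly the ``main obstacle'' you identified---there is no fidelity expansion, no Pfaffian manipulation, no tracking of subleading terms. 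Your fidelity route would work in principle but is strictly harder and unlikely to land on the same constant.

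For parts (ii)--(iii), the paper again takes a shorter path than you propose. It does \emph{not} use Alicki--Fannes--Winter continuity of $E_R$ or $E_{\mathrm{sq}}$. Instead it uses the elementary inequalities $E_R(\rho^{AB})\le D(\rho^{AB}\|\rho^A\otimes\rho^B)=I(A{:}B)$ and $E_{\mathrm{sq}}(\rho^{AB})\le\tfrac12 I(A{:}B)$ (the latter by choosing trivial $E$), and then applies Fannes--Audenaert once to $|S(\rho^{AB})-S(\rho^A\otimes\rho^B)|$ with $\epsilon=T/2$ and $d=2^{n_A+n_B}$. This is what produces exactly $\tfrac12(n_A+n_B)T+h(T/2)$ and its halved version. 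Your route via continuity of the entanglement measures themselves would yield the same scaling but a different correction term (Winter's bounds have $(1+\epsilon)h(\epsilon/(1+\epsilon))$, not $h(\epsilon)$) and a different dimensional prefactor (typically $\log\min(d_A,d_B)$ rather than $\log(d_Ad_B)$), so it does not reproduce the stated constants verbatim.
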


Proofs are given in the SM. These bounds differ sharply from the bosonic and spin cases. For qubits~\cite{Christandl2007-xq} and for general fermionic settings~\cite[Corollary~15]{krumnow2024}, the distance between extendible and separable states scales exponentially with the number of subsystems or modes. In contrast, our bounds scale linearly in trace norm and only polynomially for $E_R$ and $E_{\mathrm{sq}}$, replacing the previously exponential scaling~\cite{krumnow2024} with a linear one. The bosonic analogue~\cite[Theorem~3]{Lami_2019} exploits that any bosonic Gaussian state admits a separable counterpart with a CM proportional to its own~\cite{10.1063/1.3043788}, ensuring commutativity. No such reduction exists for fermions; instead, the key simplification in the fermionic Gaussian setting is that extendibility can be expressed directly as a linear matrix inequality for the covariance matrix, and Gaussianity ensures that controlling the off-diagonal covariance block \(X\) controls the full state. This mechanism is specific to quasifree states: in non-Gaussian or interacting fermionic systems, higher-order cumulants are independent degrees of freedom and are not fixed by the covariance matrix, so no analogous dimension-linear characterization is available. A striking consequence is that, while in the bosonic case the distance in the one-sided \(k\)-extendible setting asymptotically vanishes as \(1/k\), in the fermionic case it decays only as \(1/\sqrt{k}\). We establish the tightness of this scaling with a two-mode example, yielding a lower bound of \(1/\sqrt{k_1 k_2}\) and an upper bound of \(2/\sqrt{k_1 k_2}\). In the SM we further prove that the dependence on the number of modes is also optimal.

Although Theorem~\ref{distance-theorem} is stated as a worst-case bound and does not assume any geometric structure, its effective content can be substantially sharper in local settings.  If the Gaussian state is the ground state of a local gapped quadratic Hamiltonian, then two-point correlations typically decay exponentially with distance, so the off-diagonal covariance block \(X\) across a spatial bipartition is concentrated near the boundary between the two regions. Consequently, the relevant contribution to \(\|X\|_1\) comes only from modes within a correlation length of the cut, so the effective number of modes entering the bound is governed by the boundary size \(|\partial A|\sim n^{1-1/D}\), rather than the total number of bulk modes \(|A|\sim n\). We further study this picture numerically in the SM in a geometrically local setting, namely a disordered Kitaev chain, and find that increasing disorder makes the state more likely to be \((k_1,k_2)\)-extendible.

\emph{Extendibility of Fermionic Gaussian Channels:}
We now turn to fermionic Gaussian channels (FGCs), i.e., completely positive, trace-preserving maps that preserve Gaussianity. As shown in~\cite{bravyi2004LR}, an $m\!\to\! n$ fermionic Gaussian channel (FGC) 
$\mathcal{N}^{A\to B}$ is specified by a real $2n\times 2m$ matrix $X$ and a real antisymmetric $2n\times 2n$ matrix $N_B$ obeying $I + i N_B - X X^T \ge 0$, which implies $X X^T \le I$ and that $N_B$ is a valid CM. Its action on a CM is $M_A \mapsto X M_A X^{T} + N_B$. The Choi--Jamiołkowski state $C_{\mathcal{N}}^{AB}$, obtained by applying $\mathcal{N}$ to (the first) half of a maximally entangled fermionic Gaussian state, is itself Gaussian with CM 
\begin{align}
\label{fgc-cm}
    M_{\mathcal{N}}=\begin{pmatrix} N_B & X \\ -X^T & 0 \end{pmatrix}.
\end{align}

Determining when a quantum channel has zero capacity is a fundamental problem in quantum information. A prominent example is the class of antidegradable channels, where the environment can reconstruct the channel output; by the quantum no-cloning theorem, their quantum capacity vanishes. Characterizing antidegradability has thus attracted significant interest~\cite{10.1063/1.2953685,khanian2025strongconverseboundsprivate,PhysRevA.110.012460,Greplov__2018}.

Antidegradability is closely related to $k$-extendible channels: a channel $\mathcal{N}$ is $k$-extendible if its Choi state $C_{\mathcal{N}}^{AB}$ is $k$-extendible with respect to the output system $B$ \cite{PhysRevA.104.022401,PhysRevLett.123.070502}. In particular, antidegradability is equivalent to $2$-extendibility of the Choi state~\cite{PhysRevA.79.062307}. Applying Theorem~\ref{theorem-extendible} to the Choi CM $M_{\mathcal{N}}$ in Eq.~\eqref{fgc-cm} with $k_1=2$, $k_2=1$, $M_B=0$, and $M_A=N_B$ yields a necessary and sufficient condition for antidegradability of fermionic Gaussian channels:
\begin{proposition}
\label{prop:antideg}
A fermionic Gaussian channel $(X, N_B)$ is antidegradable if and only if there exists a real antisymmetric matrix $\Delta$ satisfying
\begin{align*}
   i\Delta \;\le\; I, \qquad i\Delta \;\le\; I + 2i N_B - 2X X^T \,.
\end{align*}
\end{proposition}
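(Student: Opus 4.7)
The plan is to apply the general extendibility criterion of Theorem~\ref{theorem-extendible} to the Choi state's covariance matrix. As noted immediately before the proposition, a channel $\mathcal{N}$ is antidegradable if and only if its Choi state $C_{\mathcal{N}}^{AB}$ is $2$-extendible with respect to the output system; so the task reduces to specializing Theorem~\ref{theorem-extendible} to the Choi CM $M_{\mathcal{N}}$ in Eq.~\eqref{fgc-cm} with $k_1 = 2$, $k_2 = 1$, $M_A = N_B$, $M_B = 0$, and off-diagonal block $X$.

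With these substitutions, the theorem demands the existence of real antisymmetric matrices $\Delta_A, \Delta_B$ with $I + i\Delta_A \ge 0$ and $I + i\Delta_B \ge 0$, such that
\[
I + i\begin{pmatrix} 2N_B - \Delta_A & \sqrt{2}\,X \\ -\sqrt{2}\,X^T & 0 \end{pmatrix} \;\ge\; 0.
\]
Because $k_2 - 1 = 0$, the matrix $\Delta_B$ decouples completely from the block constraint, and we may set $\Delta_B = 0$; only $\Delta := \Delta_A$ remains relevant.

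Next, I would simplify this block inequality via the Schur-complement identity~\eqref{schur-complement}. The lower-right block equals $I$, which is strictly positive definite, so the identity applies without any $\varepsilon$-regularization. A short calculation of the Schur complement gives
\[
I + i(2N_B - \Delta) - 2 X X^T \;\ge\; 0,
\]
i.e., $i\Delta \le I + 2iN_B - 2XX^T$. Combining this with the validity constraint $I + i\Delta \ge 0$ yields the stated condition: because $\Delta$ is real antisymmetric, the Hermitian matrix $i\Delta$ has eigenvalues that occur in $\pm$ pairs, so $I + i\Delta \ge 0$ is equivalent to $i\Delta \le I$, and the two inequalities together are precisely $i\Delta \le \{I,\, I + 2iN_B - 2XX^T\}$. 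There is no genuine obstacle beyond bookkeeping; the argument is clean because the specific values $M_B = 0$ and $k_2 = 1$ collapse the $(2,2)$ block to $I$ and eliminate $\Delta_B$, so the general SDP of Theorem~\ref{theorem-extendible} reduces to a single Schur-complement step.
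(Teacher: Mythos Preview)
Your proof is correct and follows exactly the approach the paper indicates: apply Theorem~\ref{theorem-extendible} to the Choi CM with $k_1=2$, $k_2=1$, $M_A=N_B$, $M_B=0$, then take the Schur complement with respect to the trivial lower-right block $I$. The paper leaves the Schur-complement reduction implicit (it spells it out only in the proof of Proposition~\ref{proposition:EB_replacement}), so you have simply filled in the bookkeeping the paper omits.
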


A simple example illustrating Proposition~\ref{prop:antideg} is the fermionic Gaussian attenuation channel.  
Take \(X=\sqrt{\lambda}\,\mathbb{I}\) and \(N_B=0\), for example on a two-mode system.  
The channel acts on covariance matrices as \(\Gamma_{\mathrm{out}}=\lambda\,\Gamma_{\mathrm{in}}\), and the
complete-positivity condition holds exactly for \(0\le \lambda \le 1\).  
For these parameters, the matrix appearing in Proposition~\ref{prop:antideg} is
\(I+2iN_B-2XX^{\mathsf T}=I-2\lambda I=(1-2\lambda)I\), from which one finds that the channel is
antidegradable precisely when \(0\le \lambda \le \tfrac12\).  
Thus sufficiently strong attenuation places the channel in the antidegradable regime, where its quantum
capacity is zero. From an experimental perspective, the conditions entering these
results depend only on the channel action on second moments, and the required matrices \(X\) and \(N_B\) can
therefore be inferred from covariance-matrix tomography.

Two other important zero-capacity classes are \emph{replacement channels}, which map every input to a fixed Gaussian state, and \emph{entanglement-breaking} (EB) channels, which destroy all input--output entanglement.  
A channel is EB if and only if its Choi state is separable~\cite{doi:10.1142/S0129055X03001709}.  
In the fermionic Gaussian setting, Gaussian separability is extremely restrictive: any ``measure-and-prepare'' map that outputs Gaussian states must prepare the \emph{same} Gaussian state for all measurement outcomes.  
Intuitively, this follows because fermionic Gaussian states have vanishing first moments (due to parity superselection) and mixtures of Gaussian states with distinct covariances are generically non-Gaussian.  
Consequently, the only Gaussian-preserving EB maps have $X=0$, so that $M_A \mapsto N_B$ is a replacement channel.  
We now give a proof using the extendibility framework.
\begin{proposition}
\label{proposition:EB_replacement}
For fermionic Gaussian channels, entanglement-breaking is equivalent to being a replacement channel.
\end{proposition}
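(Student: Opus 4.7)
My strategy is to combine the Choi--Jamio\l{}kowski criterion for entanglement-breaking channels with the classical--quantum binding property of fermionic Gaussian states established in Theorem~\ref{no-classical}. The key enabling observation is that $C_\mathcal{N}^{AB}$ is itself a fermionic Gaussian state, since $\mathcal{N}$ preserves Gaussianity and the maximally entangled fermionic state used in the Choi--Jamio\l{}kowski construction is Gaussian. Its CM is given by Eq.~\eqref{fgc-cm}, which is the structural object on which the argument ultimately acts.

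For the nontrivial direction, assume $\mathcal{N}$ is EB. By the standard criterion \cite{doi:10.1142/S0129055X03001709}, $C_\mathcal{N}^{AB}$ is separable, which in the $\mathcal{S}2_{\pi}$ sense adopted throughout this work together with Theorem~\ref{no-classical} forces it to be a product state. Product Gaussian states have block-diagonal CMs, so comparing to Eq.~\eqref{fgc-cm} yields $X=0$. The channel action on covariance matrices then collapses to the constant map $M_A \mapsto N_B$, and at the state level the Choi state reduces to $\tfrac{I_A}{d_A}\otimes\sigma_B$, where $\sigma_B$ denotes the Gaussian state with CM $N_B$. Since a channel is uniquely determined by its Choi state, we conclude $\mathcal{N}(\rho)=\sigma_B$ for every input $\rho$, i.e., $\mathcal{N}$ is a replacement channel.

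The converse is immediate: a replacement channel $\rho\mapsto\sigma_B$ has Choi state $\tfrac{I_A}{d_A}\otimes\sigma_B$, which is manifestly product and hence separable, so $\mathcal{N}$ is EB. The main point requiring care is ensuring that the separability notion appearing in the EB equivalence aligns with the $\mathcal{S}2_{\pi}$ hypothesis of Theorem~\ref{no-classical}; this is precisely the operational convention justified in the passage preceding Theorem~\ref{no-classical}, and is what makes the argument go through cleanly in the fermionic setting, where weaker separability notions such as $\mathcal{S}2'_{\pi}$ could otherwise permit nontrivial classically correlated Choi states. A minor secondary technicality---that a block-diagonal fermionic Gaussian CM really corresponds to a tensor-product state, with CM $=0$ giving the maximally mixed state---follows from Wick's theorem in the same way as in the proof of Proposition~\ref{k_1k_2-infty}.
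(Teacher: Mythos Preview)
Your proof is correct and takes a more direct route than the paper's. The paper argues via the extendibility SDP of Theorem~\ref{theorem-extendible}: since an entanglement-breaking channel has a Choi state that is $k$-extendible on the output system for every $k$, one applies the feasibility condition (Eq.~\eqref{eq:new_CM} in the SM) with $k_1=k$, $k_2=1$, $M_A=N_B$, $M_B=0$, takes the Schur complement with respect to the bottom-right block to obtain $I+iN_B+i(k-1)Z-kXX^T\ge 0$, and sends $k\to\infty$ to get $XX^T\le iZ$; transposing (using that $Z$ is real antisymmetric while $XX^T$ is real symmetric) gives $XX^T\le -iZ$ as well, forcing $X=0$. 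Your argument bypasses this limiting procedure entirely by invoking Theorem~\ref{no-classical} directly on the Gaussian Choi state, which is logically cleaner and shorter---indeed, the paper's proof is in effect re-deriving a special case of Proposition~\ref{k_1k_2-infty}. The paper's route has the modest advantage of showcasing the SDP machinery as an operational tool and of resting on channel $k$-extendibility, which is unambiguous irrespective of which fermionic separability class one adopts; your argument instead leans explicitly on the convention that ``separable'' means $\mathcal{S}2_\pi$, a point you rightly flag as the only delicate step.
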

\begin{proof}
    Applying Theorem~\ref{theorem-extendible} to the Choi CM $M_{\mathcal{N}}$ in Eq.~\eqref{fgc-cm} with $k_1=k$, $k_2=1$, $M_B=0$, and $M_A=N_B$, and using the Schur complement with respect to the bottom-right block gives $I + i N_B + i (k-1) Z - k X X^T \ge 0$. Taking $k\to\infty$ yields $X X^T \le i Z$; combined with $X X^T \le -i Z$, we conclude $X=0$. The channel then reduces to $M_A \mapsto N_B$, a replacement channel, which is trivially entanglement-breaking.
\end{proof}
The coincidence of EB and replacement channels may have interesting consequences in quantum information processing. In particular, it implies that, asymptotically, discriminating any (possibly non-Gaussian) fermionic channel from a fermionic Gaussian EB channel requires only non-adaptive tensor-power strategies~\cite{Cooney_2016}.  In particular, for the fermionic Gaussian attenuation channel with \(X=\sqrt{\lambda}\,I\) and \(N_B=0\), Proposition~\ref{proposition:EB_replacement} implies that the channel is entanglement breaking only at the trivial point \(\lambda=0\), where \(X=0\) and the map reduces to a replacement channel.

\emph{Discussion \& Outlook}: We have presented a comprehensive study of two-sided $(k_1,k_2)$-extendibility for fermionic Gaussian states, providing a full covariance-matrix characterization and a simple SDP formulation. Leveraging tools from quantum information theory, we derived finite de Finetti bounds that improve previous exponential scaling to linear in the number of modes, with complementary bounds for relative entropy and squashed entanglement. Through a Jordan–Wigner or related fermion-to-spin encoding, fermionic Gaussian states map to the free-fermion or matchgate sector of spin systems. Hence, after fixing the encoding convention, our covariance-matrix extendibility criterion can also be translated into an efficient criterion for this structured class of spin states, though not for arbitrary qubit states. These results provide operational criteria for separability in the fermionic Gaussian sector, ruling out nontrivial separable Gaussian states and identifying the definition that is both natural and practically meaningful. Finally, we established an SDP criterion for the anti-degradability of fermionic Gaussian channels.

Promising directions for future work include exploring applications of the derived bounds, investigating the family of classical–quantum binding states, and performing a systematic study of two-sided $(k_1,k_2)$-extendibility, which remains largely unexplored. Another further direction is to extend the analysis beyond strictly Gaussian states to convex-Gaussian states, i.e., convex mixtures of fermionic Gaussian states. Such states are generally not described by a single covariance matrix, but if the Gaussian ensemble is known, their multipoint correlators can still be computed efficiently as weighted sums of Wick/Pfaffian expressions. It would be interesting to generalize the present extendibility and de~Finetti results to this broader class. Finally, it is an open question whether fermionic Gaussian steerability can be rigorously defined and quantified, and how it is constrained by finite extendibility~\cite{Lami_2019,PhysRevLett.114.060403}.

\emph{Acknowledgements:}
We are grateful for discussions with Timothy Hsieh and Amin Moharramipour. This research was supported in part by the Perimeter Institute for Theoretical Physics, which is funded by the Government of Canada through the Department of Innovation, Science and Economic Development, and by the Province of Ontario through the Ministry of Colleges and Universities. A-R.N acknowledges support from the Natural Sciences and Engineering Research Council of Canada (NSERC) under Discovery Grant No. RGPIN-2018-04380, as well as from an Ontario Early Researcher Award. F.S.'s work is supported by the European Commission as a Marie Skłodowska-Curie Global Fellow, and he gratefully acknowledges the hospitality of the Institute for Quantum Computing (IQC), University of Waterloo, and Freie Universität Berlin.

\emph{Data availability.---}
The numerical data and source code supporting the disordered-Kitaev-chain results presented in the Supplemental Material are available in Ref.~\cite{NegariSalekData2026}.

\bibliographystyle{apsrev4-2}
\bibliography{references}

\clearpage
\newpage
\onecolumngrid

\setcounter{section}{0}
\setcounter{subsection}{0}
\setcounter{equation}{0}
\setcounter{figure}{0}
\setcounter{table}{0}
\setcounter{theorem}{0}
\renewcommand{\thesection}{S\arabic{section}}
\renewcommand{\theequation}{S\arabic{equation}}
\renewcommand{\thefigure}{S\arabic{figure}}
\renewcommand{\thetable}{S\arabic{table}}
\renewcommand{\thetheorem}{S\arabic{theorem}}

\begin{center}
{\large\bfseries Supplemental Material for\\[0.25em]
Extendibility of Fermionic Gaussian States\par}
\vspace{0.8em}
{\normalsize Amir-Reza Negari$^{1}$ and Farzin Salek$^{1,2,3}$\par}
\vspace{0.5em}
{\small
$^{1}$Perimeter Institute for Theoretical Physics, Ontario, Canada\\
$^{2}$Dahlem Center for Complex Quantum Systems, Freie Universit\"{a}t Berlin, Berlin, Germany\\
$^{3}$Institute for Quantum Computing (IQC), University of Waterloo, Ontario, Canada\par}
\end{center}
\vspace{1em}

\section{Preliminaries}

\subsection{Hilbert spaces, norms, and a convexity fact:}
The underlying space in our discussion is a Hilbert space, equipped with the norm induced by its inner product. For any vector $\ket{\psi}$,
\[
\|\psi\| = \sqrt{\braket{\psi|\psi}},
\]
and we call $\ket{\psi}$ normalized if $\|\psi\|=1$.

\medskip

For a linear operator $A$, the \emph{operator norm} is
\begin{align}
   \label{sm:operator-norm-def}
\|A\|_{\text{op}} = \sup_{\|\phi\|=1} \|A\ket{\phi}\| = \sup_{\|\phi\|=1} \sqrt{\bra{\phi}A^\dagger A\ket{\phi}}. 
\end{align}
When $A$ is Hermitian, the spectral theorem provides a decomposition $A = \sum_i \lambda_i \ket{v_i}\!\bra{v_i}$ with real eigenvalues $\lambda_i$. In this case
\[
\|A\|_{\text{op}} = \max_i |\lambda_i|.
\]

\medskip

More generally, any matrix $A$ (square or rectangular) admits a \emph{singular value decomposition (SVD)}
\[
A = U \Sigma V^\dagger,
\]
where $U,V$ are unitary (or isometries in the rectangular case) and $\Sigma$ is diagonal with nonnegative entries $s_i$ called the \emph{singular values} of $A$. By definition, $s_i$ are the eigenvalues of $|A| := \sqrt{A^\dagger A}$. This is distinct from the spectral decomposition, which applies only to normal operators. For Hermitian positive semidefinite $A$, the two coincide and the singular values are just the eigenvalues. In general, singular values equal the absolute values of eigenvalues only when $A$ is normal.

\medskip

The singular values provide a unified way to define matrix/operator norms, known as Schatten $p$-norms. For $1 \le p < \infty$,
\[
\|A\|_p := \big( \operatorname{Tr}[(A^\dagger A)^{p/2}] \big)^{1/p}
   = \left( \sum_i |s_i|^p \right)^{1/p},
\]
and for $p=\infty$,
\[
\|A\|_\infty := \lim_{p\to\infty} \|A\|_p = \max_i s_i.
\]
Special cases include:
\begin{itemize}
\item Operator norm: $\|A\|_{\text{op}} = \|A\|_\infty = \max_i s_i$,
\item Trace norm: $\|A\|_1 = \sum_i |s_i|$,
\item Hilbert–Schmidt norm: $\|A\|_2 = \sqrt{\sum_i s_i^2} = \sqrt{\operatorname{Tr}(A^\dagger A)}$.
\end{itemize}
These norms are always defined via singular values, which are nonnegative by construction.

\medskip

A standard fact from convex analysis is that any linear functional achieves its maximum over a compact convex set at one of the extreme points of that set. In quantum theory this principle is particularly useful: for instance, when optimizing a functional of the form 
\[
\sigma \;\mapsto\; \bra{\Psi}\sigma\ket{\Psi},
\]
where $\ket{\Psi}$ is fixed and $\sigma$ ranges over the convex set of separable states. Since the extreme points of the separable set are precisely the pure product states $\ket{\alpha}\otimes\ket{\beta}$, the optimization reduces to this smaller class. Entangled states lie outside the separable set and thus never contribute. Consequently, the maximum overlap of a bipartite pure state with a separable state is attained on a pure product state, and is known to equal the largest Schmidt coefficient of the given state.

\subsection{From qubits to fermions: modes, parity, and locality}\label{sm:subsec:qubits-to-fermions}

Fermionic systems are most naturally described in the \emph{second-quantized} (Fock-space) formalism, in close analogy with the bosonic framework in quantum optics. Two fermion-specific features shape locality and entanglement: (a) each mode is effectively a two-level system due to the Pauli exclusion principle, and (b) a \emph{parity superselection rule} forbids coherent superpositions of even- and odd-parity sectors. With these ingredients, fermionic modes can be treated as qubit-like degrees of freedom in Fock space, with anticommutation relations handled automatically by the formalism—for instance, $(\ket{00}+\ket{11})/\sqrt{2}$ plays the role of a fermionic Bell pair.

Entanglement in fermionic systems can be discussed from two complementary viewpoints: either between indistinguishable particles (first quantization) \cite{PhysRevA.64.022303,PhysRevA.72.022302}, or between distinguishable \emph{sets of modes} corresponding to spatial regions (second quantization) \cite{Ba_uls_2007,PhysRevA.97.042325,PhysRevA.87.022338}. The latter perspective is the natural one: fermions are indistinguishable and may delocalize freely, so entanglement is defined between mode sets rather than between identical particles. Assigning modes to observers (e.g., Alice’s modes in region $A$ and Bob’s in region $B$), correlations arise from observables on these subsystems. Unlike in bosonic or spin/qubit systems, however, observables supported on disjoint fermionic regions may anticommute, introducing algebraic constraints that make the analysis of locality and entanglement more subtle.

The translation from qubits to fermions can be consolidated into the following structural principles:

\paragraph*{(i) Mode dimensionality.} Each fermionic mode is two-dimensional, with basis $\{\ket{0},\ket{1}\}$ for empty/occupied. At this level, a single mode is isomorphic to a qubit.

\paragraph*{(ii) Parity superselection.} Physical states and operations preserve fermion-number parity. Coherent superpositions of different parity sectors are forbidden. For one mode, $\alpha\ket{0}+\beta\ket{1}$ is not physical, whereas $p\ketbra{0}{0}+(1-p)\ketbra{1}{1}$ is. For two modes, $(\ket{00}+\ket{11})/\sqrt{2}$ is allowed (even parity), while $(\ket{00}+\ket{01})/\sqrt{2}$ is not.

\paragraph*{(iii) Observable algebras and locality.} Creation/annihilation operators satisfy the canonical anticommutation relations
\begin{align}
\{\hat{c}_i,\hat{c}_j^\dagger\}=\delta_{ij},\qquad \{\hat{c}_i,\hat{c}_j\}=\{\hat{c}_i^\dagger,\hat{c}_j^\dagger\}=0,
\end{align}
and bare mode operators on disjoint regions do not commute. Physical measurements, however, are generated by \emph{parity-preserving} (even) operators, and these \emph{do} commute across spacelike-separated regions:
\begin{align}
[O_A,O_B]=0\quad\text{for } O_A,O_B \text{ even, supported on } A,B.
\end{align}
Entanglement and separability are therefore defined with respect to the even (parity-preserving) local algebras.

\paragraph*{(iv) Tensor products and operator ordering.} Fermionic operators act on the global antisymmetric Fock space. Writing $\hat{c}_1\otimes I$ or $I\otimes \hat{c}_2$ naively would enforce commutation rather than anticommutation. When a tensor-product representation is required (e.g., to map to spins), Jordan--Wigner or related encodings are employed to keep track of operator order and parity strings.

\paragraph*{(v) Subsystem parity.} Let $\hat N_A=\sum_{k=1}^{n_A} \hat{c}_k^\dagger \hat{c}_k$ be the number operator on a subsystem $A$ with $n_A$ modes. The fermion-number parity on $A$ is
\begin{align}\label{sm:eq:parity}
P^A = (-1)^{\hat N_A} = e^{i\pi \hat N_A} = \prod_{k=1}^{n_A}\!\bigl(1-2\,\hat{c}_k^\dagger \hat{c}_k\bigr),
\end{align}
which satisfies $(P^A)^2=I^A$. Physical states are block-diagonal in the eigenspaces of $P^A$. (Equivalently, using Majorana operators introduced in Sec.~\ref{sm:subsec:majorana-covariance}, one can write $P^A = i^{\,n_A}\prod_{k=1}^{n_A}\gamma_{2k-1}^A\gamma_{2k}^A$.)

These structural ingredients—mode dimensionality, parity superselection, algebraic locality, operator ordering, and subsystem parity—underlie the passage from qubits to fermions and govern the formulation of fermionic quantum information.

\subsection{Majorana operators and covariance matrices}\label{sm:subsec:majorana-covariance}

For $n$ fermionic modes with creation/annihilation operators $a_j^\dagger,a_j$, introduce $2n$ Hermitian Majorana operators
\begin{align}
\gamma_{2j-1}=\hat{c}_j+\hat{c}_j^\dagger,\qquad
\gamma_{2j}=i(\hat{c}_j-\hat{c}_j^\dagger),
\end{align}
which generate a real Clifford algebra:
\begin{align}
\{\gamma_p,\gamma_q\}=2\delta_{pq},\qquad \gamma_p^\dagger=\gamma_p.
\end{align}
Given a (mixed) state $\rho$, its (Majorana) CM $M\in\mathbb{R}^{2n\times 2n}$ is
\begin{align}
M_{pq}=\frac{i}{2}\,\Tr\!\big(\rho[\gamma_q,\gamma_p]\big)
=\begin{cases}
\Tr\!\left(i\gamma_q\gamma_p\,\rho\right), & q\neq p,\\[2pt]
0, & q=p.
\end{cases}
\end{align}
Thus $M$ is real and antisymmetric. Writing $iM$ (which is Hermitian), the \emph{bona fide} (physicality) condition for CMs is the spectral bound
\begin{align}\label{sm:eq:bf}
\operatorname{spec}(iM)\subseteq [-1,1],
\end{align}
equivalently, there exists $O\in SO(2n)$ and real numbers $\lambda_1,\ldots,\lambda_n\in[-1,1]$ such that
\begin{align}\label{sm:eq:can-form}
M=O\!\bigoplus_{j=1}^n\begin{bmatrix}0&\lambda_j\\ -\lambda_j&0\end{bmatrix}\!O^{T}.
\end{align}
In particular, the singular values of $M$ are at most $1$. For later use, note $(\gamma_p\gamma_q)^\dagger=-\gamma_p\gamma_q$ for $p\neq q$, so $(i\gamma_p\gamma_q)^2=I$ and its eigenvalues are $\pm 1$.
\subsection{Fermionic Gaussian states}\label{sm:subsec:ferm-gaussian}

A (parity-preserving) state $\rho$ on $n$ modes is called \emph{fermionic Gaussian} if it can be expressed as the (properly normalized) exponential of a quadratic form in Majoranas:
\begin{align}\label{sm:eq:gauss-def}
\rho \;=\; C\,\exp\!\left(\tfrac{i}{2}\,\gamma^{T} h\,\gamma\right),
\end{align}
with $h\in\mathbb{R}^{2n\times 2n}$ antisymmetric and $C$ determined by the normalization $\Tr\rho=1$.  
Equivalently, $\rho$ is Gaussian if and only if all higher-order correlators are fixed by the two-point ones via Wick’s theorem: for any even Majorana monomial $\gamma(x)$,
\begin{align}
\Tr\!\big(i^{|x|/2}\gamma(x)\,\rho\big)\;=\;\operatorname{Pf}\!\big(M[x]\big),
\end{align}
where $M[x]$ is the principal submatrix of the CM $M$ supported on the indices of $x$. Expectations of odd-weight monomials vanish.

\paragraph*{Purity and canonical form.}
For Gaussian states, the eigenvalues $\{\lambda_j\}$ in the canonical form \eqref{sm:eq:can-form} completely determine the spectrum of $\rho$. Purity is equivalent to
\begin{align}
M^2=-I \qquad \Leftrightarrow \qquad |\lambda_j|=1\ \ \forall j,
\end{align}
while mixed Gaussians have $|\lambda_j|<1$ for some $j$.  
In the basis that block-diagonalizes $M$, the state factorizes into a product over $n$ two-Majorana subsystems:
\begin{align}\label{sm:eq:block-prod}
\rho \;=\; \frac{1}{2^{n}}\prod_{j=1}^{n}\!\big(I+ i\lambda_j\,\tilde{\gamma}_{2j-1}\tilde{\gamma}_{2j}\big),
\end{align}
where $\tilde{\gamma}=O^{T}\gamma$.  
Using the identity
\[
\exp\!\big(\alpha\, i\tilde{\gamma}_{2j-1}\tilde{\gamma}_{2j}\big)
= \cosh\alpha \;+\; \sinh\alpha\, i\tilde{\gamma}_{2j-1}\tilde{\gamma}_{2j},
\]
and comparing with \eqref{sm:eq:gauss-def} gives the matrix functional relation
\begin{align}\label{sm:eq:tanh}
\tanh\!\left(\tfrac{h}{2}\right) = M
\end{align}
(in the simultaneous block-diagonal basis of $h$ and $M$).

\paragraph*{Single-mode states.}
Parity superselection fixes the structure of one-mode states: the only pure states are $\ket{0}$ and $\ket{1}$, and mixed states are diagonal in this basis. Every one-mode state is Gaussian and admits the representation
\begin{align}
\rho \;=\; \tfrac12\big(I+ i\lambda\,\gamma_1\gamma_2\big), \qquad \lambda\in[-1,1],
\end{align}
with $\lambda=\pm 1$ corresponding to $\ketbra{0}{0}$ and $\ketbra{1}{1}$, and $\lambda=0$ the maximally mixed state.

\paragraph*{Vacuum and Bell-pair examples.}
Let $\ket{\mathrm{vac}}$ denote the fermionic Fock vacuum. For a single mode with Majoranas $(\gamma_1,\gamma_2)$ its CM is
\begin{align}
M_{\ket{\mathrm{vac}}}=
\begin{bmatrix}
0 & -1\\
1 & \phantom{-}0
\end{bmatrix}.
\end{align}
For two modes with Majoranas $(\gamma_1,\gamma_2,\gamma_3,\gamma_4)$, the even Bell states
\begin{align}
\ket{\Phi^\pm}=\tfrac{1}{\sqrt{2}}(\ket{00}\pm \ket{11}),\quad
\ket{\Psi^\pm}=\tfrac{1}{\sqrt{2}}(\ket{01}\pm \ket{10})
\end{align}
have CMs (with respect to the natural ordering) given by
\begin{align}
M_{\Phi^\pm}=
\begin{bmatrix}
0 & 0 & 0 & \pm 1\\
0 & 0 & \pm 1 & 0\\
0 & \mp 1 & 0 & 0\\
\mp 1 & 0 & 0 & 0
\end{bmatrix},\qquad
M_{\Psi^\pm}=
\begin{bmatrix}
0 & 0 & 0 & \mp 1\\
0 & 0 & \pm 1 & 0\\
0 & \mp 1 & 0 & 0\\
\pm 1 & 0 & 0 & 0
\end{bmatrix}.
\end{align}
More generally, the maximally entangled $m$-mode state between two $m$-mode registers,
\begin{align}
\ket{\mathrm{EPR}} \;=\; \frac{1}{\sqrt{2^{m}}}\sum_{j=0}^{2^m-1}\ket{j,j},
\end{align}
has a CM with block off-diagonal form
\begin{align}
M_{\mathrm{EPR}}=
\begin{bmatrix}
0 & I_{m}\\
-\,I_{m} & 0
\end{bmatrix}.
\end{align}

\paragraph*{Locality, ordering, and signs.}
Because $\{\hat{c}_i^\dagger,\hat{c}_j^\dagger\}=0$, the order of creation operators is crucial:
\begin{align}
\hat{c}_1^\dagger \hat{c}_2^\dagger\ket{\mathrm{vac}} = -\,\hat{c}_2^\dagger \hat{c}_1^\dagger\ket{\mathrm{vac}}.
\end{align}
A consistent global convention for operator ordering must be fixed, after which relative minus signs in multi-mode states follow automatically. This is a bookkeeping feature of second quantization and does not obstruct operational locality, provided we restrict to even observables.

\paragraph*{Two structural corollaries.}
(i) Since $M$ is real and antisymmetric, its eigenvalues come in pairs $\pm i\lambda_j$ with $\lambda_j\in\mathbb{R}$. The physicality condition \eqref{sm:eq:bf} is equivalently $|\lambda_j|\le 1$.  
(ii) Traces of odd Majorana monomials vanish.
\medskip
\subsection{Fermionic Gaussian channels}

A \emph{fermionic Gaussian channel} (FGC) is a completely positive trace-preserving (CPTP) map that sends fermionic Gaussian states to fermionic Gaussian states~\cite{Greplov__2018,bravyi2005classicalcapacityfermionicproduct}. As in the bosonic case, FGCs are fully determined by their action on CMs.  

An $m\to n$ mode FGC $\mathcal{N}^{A\to B}$ is specified by a real $2n\times 2m$ matrix $X$ and a real antisymmetric $2n\times 2n$ matrix $N_B$, acting as
\begin{align}
\label{sm:G-channel}
    M_A \ \longmapsto\ X M_A X^T + N_B ,
\end{align}
for any input CM $M_A$. Bona fid constraints on $(X,N_B)$ can be most easily obtained via the Choi--Jamiołkowski (Choi) state, defined by 
\begin{align}
    C_{\mathcal{N}}^{AB} = (\mathcal{N}\otimes I_A)(\Phi_{AA'}^{+}),
\end{align}
with $\Phi_{AA'}^{+}$ a maximally entangled fermionic Gaussian Bell state. Since Gaussianity is preserved, $C_{\mathcal{N}}^{AB}$ is Gaussian with CM
\begin{align}
\label{sm:CJ-CM}
    M_{AB} =
    \begin{pmatrix}
        N_B & X \\
        -X^T & 0
    \end{pmatrix}.
\end{align}
The map \eqref{sm:G-channel} is a valid FGC iff $M_{AB}$ is a bona fide CM, i.e.
\begin{align}
\label{sm:iff-G-channel}
    I + iN_B - XX^T \ \ge\ 0 .
\end{align}
Implying: (i) $XX^T \le I$, and (ii) $I+iN_B \ge 0$.

\subsubsection*{Degradable, anti-degradable, and related channels}

An FGC $\mathcal{N}$ is \emph{degradable} if its complementary channel $\mathcal{N}^c$ can be simulated from $\mathcal{N}$ by a CPTP degrading map $\mathcal{D}$:
\begin{align}
    \mathcal{N}^c = \mathcal{D}\circ \mathcal{N}.
\end{align}
It is \emph{anti-degradable} if there exists a CPTP map $\mathcal{A}$ such that
\begin{align}
    \mathcal{N} = \mathcal{A}\circ \mathcal{N}^c,
\end{align}
in which case its quantum capacity vanishes by no-cloning theorem.  

A special case is the \emph{replacement channel}, which outputs a fixed state $\sigma$ regardless of input: 
\begin{align}
    \mathcal{N}(\rho) = \sigma, \quad \forall\rho.
\end{align}
If $\sigma$ is separable from any reference, $\mathcal{N}$ is \emph{entanglement-breaking}~\cite{doi:10.1142/S0129055X03001709}, destroying all input–reference entanglement.

\subsection{useful lemmas:}

\begin{lemma}[{\cite[Proposition~3]{trace-antonio}}]
\label{sm:trace-distance-lemma}
Let \( \rho \) and \( \sigma \) be two fermionic states with correlation matrices \( M_\rho \) and \( M_\sigma \), respectively. Then, their trace distance is lower bounded by the operator norm of the difference of their correlation matrices:
\begin{equation}
\| M_\rho - M_\sigma \|_\infty \leq \| \rho - \sigma \|_1.
\end{equation}
\end{lemma}
\medskip

\begin{lemma}[{\cite[Theorem 6]{trace-antonio}}]
\label{sm:trace-distance-lemma-upper}
Let $\rho$ and $\sigma$ be two fermionic states (possibly mixed) with CMs $M_\rho$ and $M_\sigma$, respectively.  
Then, the trace distance between $\rho$ and $\sigma$ is upper bounded by one half of the trace norm of the difference of their CMs:
\begin{equation}
\| \rho - \sigma \|_1 \leq \frac{1}{2} \| M_\rho - M_\sigma \|_1.
\end{equation}
\end{lemma}

\begin{lemma}
\label{sm:column-sum}
    Given a CM \(M\) the following inequality holds:
    \begin{equation*}
        \sum_k (M_{ki})^2 \leq 1,
    \end{equation*}
\end{lemma}
\begin{proof}
Since \(M\) is a valid fermionic covariance matrix, all singular values of \(M\) are at most one. Hence
\[
        M^TM\le I .
\]
Taking the \(i\)-th diagonal entry of this matrix inequality gives
\[
        (M^TM)_{ii}=\sum_k M_{ki}^2\le 1,
\]
which proves the claim.
\end{proof}

\begin{lemma}
\label{sm:abs-val-1}
Let \( M \) be the CM of a fermionic Gaussian state. Then, for all \( i, j \),
\begin{equation*}
    |M_{ij}| \leq 1.
\end{equation*}
\end{lemma}

\begin{proof}
By definition, the off-diagonal entries of the CM \( M \) are given by
\[
M_{ij} = \operatorname{Tr}(i \gamma_j \gamma_i \rho),
\]
where \( \gamma_i \) and \( \gamma_j \) are Majorana operators satisfying \( \gamma_k = \gamma_k^\dagger \), \( \gamma_k^2 = I \), and \( \{ \gamma_i, \gamma_j \} = 2\delta_{ij} \).

Define the operator \( A := i \gamma_j \gamma_i \). Then \( A \) is Hermitian, and since
\[
A^2 = (i \gamma_j \gamma_i)^2 = -\gamma_j \gamma_i \gamma_j \gamma_i = -\gamma_j (\gamma_i \gamma_j) \gamma_i = -\gamma_j (-\gamma_j \gamma_i) \gamma_i = \gamma_j \gamma_j \gamma_i \gamma_i = I,
\]
we have \( A^2 = I \), so the eigenvalues of \( A \) are \( \pm 1 \). Therefore, its operator norm is \( \|A\|_{\text{op}} = 1 \).

Now, since \( \rho = \sum_kp_k\ketbra{\psi_k} \) is a valid density matrix (positive semidefinite, \( \operatorname{Tr}(\rho) = 1 \)), the expectation value satisfies
\begin{align*}
    |M_{ij}| &= |\Tr(A \rho)| \\
&= \abs{\sum_kp_k\bra{\psi_k}A\ket{\psi_k}}\\
&\leq \sum_kp_k\abs{\bra{\psi_k}A\ket{\psi_k}} \\
&\leq  \sum_kp_k . \|A\|_{\text{op}} \\
&= \|A\|_{\text{op}} \\
&= 1,
\end{align*}
where the first inequality follows from Jensen's inequality applied to the absolute value function which is a convex function (it also follows more easily from triangle inequality though), and the second inequality follows from the definition of the operator norm Eq. \eqref{sm:operator-norm-def}.
This proves the claim.
\end{proof}

\medskip

\begin{lemma}
\label{sm:convex-set}
For any bipartite pure state $\ket{\psi}^{AB}$ with maximal Schmidt coefficient $\lambda_1(\psi)$, the maximal overlap with a separable state $\sigma^{AB}$ is bounded by
\[
\bra{\psi}\,\sigma\,\ket{\psi} \;\leq\; \lambda_1(\psi).
\]
\end{lemma}

\begin{proof}
The functional $\sigma \mapsto \bra{\psi}\sigma\ket{\psi}$ is linear in $\sigma$. By convex analysis, its maximum over the compact convex set of separable states 
\[
\sigma^{AB} = \sum_x p(x)\,\sigma_x^A \otimes \sigma_x^B
\]
is attained at an extreme point, i.e., at a pure product state $\ket{\alpha}\otimes\ket{\beta}$. The maximum squared overlap of $\ket{\psi}$ with a product state is known to equal the largest Schmidt coefficient $\lambda_1(\psi)$. Therefore,
\[
\max_{\text{sep }\sigma} \bra{\psi}\sigma\ket{\psi} \;\leq\; \lambda_1(\psi).
\]
\end{proof}

\medskip
\section{Extendibility of Gaussian fermions}
\label{sm:extendibility-of-fermions}

Let \(\rho^{AB}\) be a fermionic Gaussian state with CM
\begin{equation*}
    M_{AB} =
    \begin{pmatrix}
        M_A & X \\
        -X^{T} & M_B
    \end{pmatrix}.
\end{equation*}
By definition, \(M_{AB}\) is real and antisymmetric. The diagonal blocks \(M_A\) and \(M_B\) are the reduced CMs of the subsystems \(A\) and \(B\), hence are real and antisymmetric themselves. The off-diagonal block is otherwise unconstrained apart from the requirement that the full CM remain real and antisymmetric; accordingly, we may take \(X\) to be an arbitrary real matrix (in particular, \(X\) need not be antisymmetric).

The set of Gaussian states is closed under taking extensions by Theorem 1 of the main text; in particular, a \((k_1,k_2)\)-extension of \(\rho^{AB}\) exists within the Gaussian family and has a CM of the form given in Eq.~\eqref{sm:eq:ext_CM}. Whenever that extended CM is valid, all moments (including higher-order ones) are finite; see, e.g., Lemma~\ref{sm:abs-val-1}. The crucial observation, however, is that the additional blocks \(Y\) and \(Z\) must themselves be antisymmetric, a property that is far from obvious a~priori. Recognizing and proving this fact is essential in the fermionic setting. In the bosonic case, the corresponding statement was established in~\cite{Lami_2019} by combining extendibility with structural properties of the bosonic CM. Our approach takes a different route: the result is obtained directly from the fermionic algebra of Majorana operators together with the definition of extendibility. This highlights that the fermionic case cannot simply be reduced to the bosonic one, but requires its own algebraic foundations.

\begin{lemma}
\label{sm:Y-antisymmetric}
The matrices \(Y\) and \(Z\) in Eq.~\eqref{sm:eq:ext_CM} is antisymmetric.
\end{lemma}

\begin{proof}
Consider two distinct copies \(B_m\) and \(B_n\) within the \(B\)-part of a symmetric \((k_1,k_2)\)-extension of \(\rho^{AB}\). Let \(\gamma_q^{B_m}\) and \(\gamma_p^{B_n}\) denote Majorana operators acting on \(B_m\) and \(B_n\), respectively. Writing (by slight abuse of notation) \(M_{(qp)} \equiv Y_{qp}\) for the \((q,p)\)-entry of the CM block coupling \(B_m\) to \(B_n\), we have
\begin{align*}
Y_{qp}
&= i\,\Tr\!\left(\gamma_q^{B_m}\gamma_p^{B_n}\,\rho^{A_1\ldots A_{k_1},B_1\ldots B_{k_2}}\right) \\
&= \Tr\!\left(i\,\gamma_q^{B_m}\gamma_p^{B_n}\,U_{mn}\,\widetilde{\rho}^{A_1\ldots A_{k_1},B_1\ldots B_{k_2}}\,U_{mn}^\dagger\right) \\
&= \Tr\!\left(i\,U_{mn}^\dagger\gamma_q^{B_m}U_{mn}\;U_{mn}^\dagger\gamma_p^{B_n}U_{mn}\;\widetilde{\rho}^{A_1\ldots A_{k_1},B_1\ldots B_{k_2}}\right) \\
&= \Tr\!\left(i\,\gamma_q^{B_n}\gamma_p^{B_m}\,\widetilde{\rho}^{A_1\ldots A_{k_1},B_1\ldots B_{k_2}}\right) \\
&= -\,\Tr\!\left(i\,\gamma_p^{B_m}\gamma_q^{B_n}\,\widetilde{\rho}^{A_1\ldots A_{k_1},B_1\ldots B_{k_2}}\right) \\
&= -\,Y_{pq}.
\end{align*}
Here \(U_{mn}\) is the unitary that swaps the subsystems \(B_m\) and \(B_n\) (an element of the permutation representation of \(S_{k_2}\)). The second line uses symmetric extendibility; the third uses cyclicity of the trace; the fourth uses \(U_{mn}^\dagger\gamma_q^{B_m}U_{mn}=\gamma_q^{B_n}\) and \(U_{mn}^\dagger\gamma_p^{B_n}U_{mn}=\gamma_p^{B_m}\); and the penultimate equality follows from the anticommutation of Majorana operators acting on disjoint subsystems. Hence \(Y_{qp}=-Y_{pq}\), i.e., \(Y\) is antisymmetric. The same reasoning applies to \(Z\), which is therefore antisymmetric as well.

\end{proof}

\medskip

The permutation-invariance requirement in the definition of symmetric extendibility can in fact be eliminated. One may define a \((k_1,k_2)\)-extension of \(\rho^{AB}\) simply as a state \(\widetilde{\rho}^{A_1\ldots A_{k_1},B_1\ldots B_{k_2}}\) whose two-body marginals satisfy \(\rho^{A_iB_j}=\rho^{AB}\) for all pairs \((i,j)\), without any symmetry assumption. With this weaker definition, extendibility and symmetric extendibility turn out to be equivalent. The implication from symmetric to general is trivial, while the converse follows by averaging: given any extension \(\widetilde{\rho}^{A_1\ldots A_{k_1},B_1\ldots B_{k_2}}\), one obtains a symmetric extension by twirling it over the product of permutation groups \(S_{k_1}\times S_{k_2}\),
\begin{align*}
\widetilde{\rho}^{A_1\ldots A_{k_1},B_1\ldots B_{k_2}}
=\frac{1}{k_1!\,k_2!}\sum_{\pi_a\in S_{k_1}}\sum_{\pi_b\in S_{k_2}}
\Big(U^{A_1\ldots A_{k_1}}_{\pi_a}\otimes U^{B_1\ldots B_{k_2}}_{\pi_b}\Big)\,
\widetilde{\widetilde{\rho}}^{A_1\ldots A_{k_1},B_1\ldots B_{k_2}}\,
\Big(U^{A_1\ldots A_{k_1}}_{\pi_a}\otimes U^{B_1\ldots B_{k_2}}_{\pi_b}\Big)^\dagger,
\end{align*}
where \(U^{A_1\ldots A_{k_1}}_{\pi_a}\) and \(U^{B_1\ldots B_{k_2}}_{\pi_b}\) implement the permutations \(\pi_a\) and \(\pi_b\).  

As a consequence, for a Gaussian bipartite state \(\rho^{AB}_{\mathrm{G}}\) it makes no difference whether one demands Gaussianity, symmetry, or both: the existence of a Gaussian symmetric \((k_1,k_2)\)-extension is equivalent to the existence of a Gaussian extension without symmetry, and both are in turn equivalent to the existence of a symmetric or general \((k_1,k_2)\)-extension that need not be Gaussian.

\section{Proof of Theorem 2}
\label{sm:appendix:two_sided_proof}

\medskip

Let $\rho^{A_1 \dots A_{k_1} B_1 \dots B_{k_2}}$ be a symmetric $(k_1,k_2)$-extension of the bipartite Gaussian state $\rho^{AB}_{\mathrm{G}}$.
Even if $\widetilde{\rho}$ is not Gaussian, its second moments define a CM of the form
\begin{align}
\nonumber
    M_{A_1 \dots A_{k_1} B_1 \dots B_{k_2}}
    &=
    \begin{pmatrix}
      M_{AA} & \widetilde{M}_{AB} \\ \label{sm:eq:ext_CM}
      -M_{AB}^{T} & M_{BB}
    \end{pmatrix}_{2\times 2}
    \\
    &=
    \begin{pmatrix}
      M_A     & Z       & \cdots & Z       & X       & X       & \cdots & X \\
      Z       & M_A     & \cdots & \vdots  & X       & X       & \cdots & X \\
      \vdots  & \vdots  & \ddots & Z       & \vdots  & \vdots  & \ddots & \vdots \\
      Z       & \cdots  & Z      & M_A     & X       & X       & \cdots & X \\[6pt]
      - X^T   & - X^T   & \cdots & - X^T   & M_B     & Y       & \cdots & Y \\
      - X^T   & - X^T   & \cdots & - X^T   & Y       & M_B     & \cdots & Y \\
      \vdots  & \vdots  & \ddots & \vdots  & \vdots  & \vdots  & \ddots & \vdots \\
      - X^T   & - X^T   & \cdots & - X^T   & Y       & Y       & \cdots & M_B
    \end{pmatrix}_{(k_1+k_2)\times (k_1+k_2)},
\end{align}
where permutation symmetry of the extension enforces the block-constant structure
\begin{align}
\label{sm:eq:matrices_compact}
M_{AA} =
\begin{pmatrix}
    M_A & Z & \dots & Z \\
    Z & M_A & \dots & \vdots \\
    \vdots & \vdots & \ddots & Z \\
    Z & \dots & Z & M_A
\end{pmatrix}_{k_1\times k_1},
\quad
M_{BB} =
\begin{pmatrix}
    M_B & Y & \dots & Y \\
    Y & M_B & \dots & \vdots \\
    \vdots & \vdots & \ddots & Y \\
    Y & \dots & Y & M_B
\end{pmatrix}_{k_2\times k_2},
\end{align}
\begin{align}
\label{sm:eq:ABblock}
M_{AB} =
\begin{pmatrix}
    X & X & \dots & X \\
    X & X & \dots & \vdots \\
    \vdots & \vdots & \ddots & X \\
    X & \dots & X & X
\end{pmatrix}_{k_1\times k_2}.
\end{align}
Here $X\in\mathbb{R}^{2n_A\times 2n_B}$ is arbitrary, while 
$Y\in\mathbb{R}^{2n_B\times 2n_B}$ and 
$Z\in\mathbb{R}^{2n_A\times 2n_A}$ are required to be real antisymmetric matrices.
 physically, $X,Y,Z$ encode inter-mode correlations across distinct fermionic subsystems.

An extension with CM \( M\) is physical iff the fermionic physicality condition holds:
\begin{align}
\label{sm:eq:phys_block}
I + i\widetilde{M}
=
\begin{pmatrix}
I + iM_{AA} & iM_{AB} \\
- iM_{AB}^{T} & I + iM_{BB}
\end{pmatrix} \ge 0.
\end{align}

To analyse \eqref{sm:eq:phys_block} we use the standard Schur-complement characterization of positivity of Hermitian block matrices \cite[Theorem~1.12]{Horn2005}:
\begin{align*}
  \begin{pmatrix} A & C \\ C^\dagger & B \end{pmatrix} \ge 0 
  \quad \Leftrightarrow \quad  
  A \ge 0,\quad B \ge C^\dagger A^{-1} C ,
\end{align*}
with the non-invertible case handled by the usual limiting argument \cite{Lami_2019}. The same equivalence holds with $A$ and $B$ interchanged. It is useful to write Eq. \eqref{sm:eq:phys_block} as follows:
\begin{align*}
I + iM
=
\begin{pmatrix}
I + iM_{AA} & iM_{AB} \\
(iM_{AB})^{\dagger} & I + iM_{BB}
\end{pmatrix} \ge 0.
\end{align*}

The Schur complement then reduces to the following condition:
\begin{align}
\label{sm:eq:schur-1}
I + iM_{AA} &\ge 0,\\[4pt]
\label{sm:eq:schur-2}
I + iM_{BB} - M_{AB}^\dagger (I + iM_{AA})^{-1} M_{AB} &\ge 0.
\end{align}
(These in particular imply $I + iM_{BB}\ge 0$ as remarked earlier.)

It is convenient to decompose the block-constant matrices into the symmetric direction (span of the uniform vector) and its orthogonal complement. Define
\[
\ket{+}_{k} \coloneqq \frac{1}{\sqrt{k}}\sum_{j=1}^{k}\ket{j}
\]
A direct expansion yields the useful identities (we keep the explicit intermediate step because it is instructive):
\begin{align}
\nonumber
I + iM_{AA}
&=
\begin{pmatrix}
I+iM_A & iZ & \cdots & iZ \\
iZ & I+iM_A & \cdots & \vdots \\
\vdots & \vdots & \ddots & iZ \\
iZ & \cdots & iZ & I+iM_A
\end{pmatrix}_{2k_1\times 2k_1} \\[4pt]
\label{sm:first-schur-used}
&= k_1\ketbra{+}_{k_1}\otimes iZ + I_{k_1} \otimes (I+iM_A - iZ) \\[4pt]
\label{sm:first-schur}
&= \ketbra{+}_{k_1}\otimes\bigl(I+iM_A +(k_1-1) iZ\bigr) + \bigl(I_{k_1} - \ketbra{+}_{k_1}\bigr)\otimes\bigl(I+iM_A - iZ\bigr).
\end{align}
The analogous decomposition on the $B$ side is
\begin{align}
\label{sm:second-schur}
I + iM_{BB}
= \ketbra{+}_{k_2}\otimes\bigl(I + iM_B + (k_2 - 1)iY\bigr)
+ \bigl(I_{k_2} - \ketbra{+}_{k_2}\bigr)\otimes\bigl(I + iM_B - iY\bigr).
\end{align}
Finally the off-diagonal block is proportional to the cross projector:
\begin{align}
\label{sm:third-schur}
iM_{AB} = \sqrt{k_1 k_2}\; \ket{+}_{k_1}\!\bra{+}_{k_2} \otimes iX.
\end{align}

Because $\ketbra{+}$ and $I-\ketbra{+}$ are orthogonal projectors, the large matrices above decompose into direct sums acting on orthogonal subspaces. Consequently the Schur conditions \eqref{sm:eq:schur-1}--\eqref{sm:eq:schur-2} reduce to the following four matrix inequalities, obtained by imposing positivity separately on the symmetric sector and on its orthogonal complement:

\begin{align}
\label{sm:schur-c1}
I + iM_A +(k_1-1) iZ &\ge 0,\\
\label{sm:schur-c2}
I + iM_A - iZ &\ge 0,\\
\label{sm:eq:schur_final}
I + iM_B + (k_2 - 1)iY - k_1 k_2\, X^T \bigl(I + iM_A + (k_1 - 1)iZ\bigr)^{-1} X &\ge 0,\\
\label{sm:schur-c4}
I + iM_B - iY &\ge 0.
\end{align}

Thus \eqref{sm:schur-c1}--\eqref{sm:schur-c4} are necessary conditions for the CM of a $(k_1,k_2)$-extension to be physical. Moreover, when the global state is Gaussian these conditions are also sufficient (second moments completely characterize Gaussian states).

These condition can be succinctly imposed by asking the following matrix to satisfy the fermionic bona fide condition:

\begin{align}
    \label{sm:eq:new_CM}
 \begin{pmatrix} 
        (M_A-Z) & 0 & 0 & 0 \\
        0 & M_A + (k_1 - 1)Z & \sqrt{k_1 k_2} X & 0 \\
        0 & -\sqrt{k_1 k_2} X^{T} & M_B + (k_2 - 1)Y & 0 \\
        0 & 0 & 0 & (M_B-Y)
    \end{pmatrix} 
\end{align}

Define $\Delta_A = M_{A} - Z$ and $\Delta_B = M_B - Y$, and we rewrite the matrix as follows:

\begin{align}
 \begin{pmatrix} 
        \Delta_A & 0 & 0 & 0 \\
        0 & k_1M_A - (k_1 - 1)\Delta_A & \sqrt{k_1 k_2} X & 0 \\
        0 & -\sqrt{k_1 k_2} X^{T} & k_2M_B - (k_2 - 1)\Delta_B & 0 \\
        0 & 0 & 0 & \Delta_B
    \end{pmatrix} 
\end{align}
Hence, a necessary condition for $(k_1,k_2)$-extendibility of any fermionic state is existence of valid CMs $\Delta_A$ and $\Delta_B$ such that the following is a valid CM:
\begin{align}
    \label{sm:eq:new_CM-new-new}
 \begin{pmatrix} 
        k_1M_A - (k_1 - 1)\Delta_A & \sqrt{k_1 k_2} X  \\
     -\sqrt{k_1 k_2} X^{T} & k_2M_B - (k_2 - 1)\Delta_B
    \end{pmatrix} 
\end{align}
When $\rho^{AB}$ is Gaussian, second moments are sufficient to characterize extendibility, and the condition above becomes necessary and sufficient.
\hfill$\square$

\section{Proof of Lemma~3 of the main text}

\emph{Proof.—}
Assume that \(\rho^{AB}\) is \((k_1,k_2)\)-extendible. By Theorem~2 of the main text, there exist valid CMs \(\Delta_A\) and \(\Delta_B\) such that
\begin{align}
M_{\mathrm{eff}}
=
\begin{pmatrix}
k_1M_A-(k_1-1)\Delta_A & \sqrt{k_1k_2}\,X\\
-\sqrt{k_1k_2}\,X^T & k_2M_B-(k_2-1)\Delta_B
\end{pmatrix}
\end{align}
is a valid fermionic CM. Hence all singular values of \(M_{\mathrm{eff}}\) are at most one, or equivalently
\begin{align}
        M_{\mathrm{eff}}^T M_{\mathrm{eff}}\le I .
\end{align}

Taking the principal block of this inequality corresponding to the \(B\)-subsystem gives
\begin{align}
        k_1k_2\,X^T X
        +
        \bigl(k_2M_B-(k_2-1)\Delta_B\bigr)^T
        \bigl(k_2M_B-(k_2-1)\Delta_B\bigr)
        \le I .
\end{align}
Since the second term is positive semidefinite, we obtain
\begin{align}
        X^T X\le \frac{I}{k_1k_2}.
\end{align}
Equivalently,
\begin{align}
        \|X\|_{\mathrm{op}}\le \frac{1}{\sqrt{k_1k_2}} .
\end{align}
Taking instead the principal block corresponding to the \(A\)-subsystem gives
\begin{align}
        k_1k_2\,XX^T
        +
        \bigl(k_1M_A-(k_1-1)\Delta_A\bigr)
        \bigl(k_1M_A-(k_1-1)\Delta_A\bigr)^T
        \le I ,
\end{align}
and therefore
\begin{align}
        XX^T\le \frac{I}{k_1k_2}.
\end{align}
This proves the claimed singular-value bound.
\hfill\(\square\)

\section{Separability in Fermionic Gaussian States and Theorem~5}
\label{sm:sec:fermion-separability}

In fermionic systems, the notion of a product state is more subtle than in qubit systems due to anti-commutation relations between operators on different subsystems \cite{Ba_uls_2007}. We briefly distill the relevant notions from \cite{Ba_uls_2007}. As an example bipartite fermionic state $\rho^{AB}$ is a \emph{product state} if it can be written as $\rho^{AB}=\rho^{A}\otimes \rho^{B}$. Following \cite{Ba_uls_2007}, we denote the set of such states by $\mathcal{P}2$. Here, neither the marginals $\rho^{A},\rho^{B}$ nor the global state $\rho^{AB}$ are required to be \emph{physical}, i.e., they may fail to commute with the parity operator. If we further require these states to be physical, the resulting set is denoted by $\mathcal{P}2_\pi$.
A different definition arises from factorization of expectation values: a fermionic state $\rho^{AB}$ is a product if, for all (possibly non-physical and non-commuting) observables $O^{A}$ and $O^{B}$,
\begin{align}
    \Tr\!\big[\rho^{AB}(O^{A} O^{B})\big] \;=\; \Tr(\rho^{A} O^{A})\,\Tr(\rho^{B} O^{B}).
\end{align}
Let $\mathcal{P}3$ denote the set of states satisfying this factorization. In general, $\mathcal{P}2 \neq \mathcal{P}3$. If, however, we restrict the \emph{states} (not the observables) to be physical, we obtain $\mathcal{P}3_\pi$, and the crucial identity holds:
\begin{align}
    \mathcal{P}2_\pi \;=\; \mathcal{P}3_\pi .
\end{align}
The proof is given in \cite[App.~A.6]{Ba_uls_2007} (see also \cite{moriya2005}). For intuition, we sketch an analogous argument in the qubit setting.

\medskip

\noindent\textbf{Qubit analogy.}
In qubit systems, a bipartite state $\rho^{AB}$ is product iff
\begin{align}
\label{sm:product-qubit-1}
 \Tr\!\big[\rho^{AB}(O_A \otimes O_B)\big] \;=\; \Tr(\rho^A O_A)\,\Tr(\rho^B O_B) \quad \forall\, O_A,O_B.
\end{align}
One direction is immediate. For the converse, recall that $\rho=\sigma$ iff $\Tr(\rho O)=\Tr(\sigma O)$ for all $O$. Since any $O^{AB}$ is a linear combination of product operators $O_A\otimes O_B$, Eq.~\eqref{sm:product-qubit-1} is equivalent to
\begin{align}
\label{sm:product-qubit-2}
 \Tr\!\big[\rho^{AB} O^{AB}\big] \;=\; \Tr(\rho^A O^A)\,\Tr(\rho^B O^B) \quad \forall\, O^{AB},
\end{align}
and \eqref{sm:product-qubit-2} implies \eqref{sm:product-qubit-1} trivially. This is the essence of the equivalence $\mathcal{P}2_\pi=\mathcal{P}3_\pi$.

\medskip

\noindent\textbf{Separable sets.}
There are two natural separability constructions. One may take convex combinations of $\mathcal{P}2$ states and require the mixture to be physical; denote this by $\mathcal{S}2'_\pi$. Alternatively, one may take convex combinations of $\mathcal{P}2_\pi(=\mathcal{P}3_\pi)$; denote this by $\mathcal{S}2_\pi$. In $\mathcal{S}2'_\pi$, the components may be unphysical but the mixture is physical, whereas in $\mathcal{S}2_\pi$ each component is already physical. Clearly, $\mathcal{S}2'_\pi \supset \mathcal{S}2_\pi$.

For \emph{Gaussian} fermionic states, separability relative to the smaller set $\mathcal{S}2_\pi$ exhibits a striking feature: Gaussian separable states are \emph{quantum-classical binding}, in the sense that classical correlations cannot appear without quantum correlations. In particular, one cannot generate classical correlations by Gaussian LOCC. One proof uses Theorem~5; another follows directly from
\begin{align}
    G\mathcal{S}3_\pi \;=\; G\mathcal{P}2_\pi \;=\; G\mathcal{P}3_\pi,
\end{align} 
which precludes classical correlations in the absence of entanglement under Gaussian operations and classical communication.

\noindent We now present the proof of Theorem~5, establishing that, for fermionic Gaussian states, classical correlations are present if and only if entanglement is present. This theorem, together with the extendibility framework, indicates that $\mathcal{S}2_{\pi}$ is the natural candidate for defining separable fermionic Gaussian states. Furthermore, it lends support to a more general picture: even beyond the Gaussian setting, extendibility requires the second-order moments to vanish, consistent with the intuition behind this definition.

\begin{proof}[Proof of Theorem~5]
We combine the fermionic superselection rule with a parity-twirl argument to show that all two-point cross-correlations between regions $A$ and $B$ vanish for separable, physical states; by Wick’s theorem, this then eliminates \emph{all} correlations in the Gaussian case.

Consider a separable fermionic state with decomposition
\begin{equation*}
\rho^{AB} = \sum_i p_i\, \rho^A_i \otimes \rho^B_i .
\end{equation*}
A sufficient condition for $\rho^{AB}$ to satisfy the fermionic superselection rule is that each component respects parity:
\begin{equation*}
[\rho_i^{AB},P^{AB}] = 0 \quad \text{with} \quad \rho_i^{AB}:=\rho^A_i\otimes\rho^B_i .
\end{equation*}
Equivalently, each factor commutes with the local parity,
\begin{equation*}
[\rho^A_i,P^A]=0,\qquad [\rho^B_i,P^B]=0 \quad \forall i.
\end{equation*}

Let $\gamma_m^A$ and $\gamma_n^B$ be Majorana operators on $A$ and $B$. Using $P^A \gamma_m^A P^A = -\gamma_m^A$, $(P^A)^2 = I^A$, and the cyclicity of trace, we compute
\begin{align*}
\Tr\!\big[(\gamma_m^A \gamma_n^B)\, \rho^{AB}\big]
&= \sum_i p_i\, \Tr\!\big[(\gamma_m^A  \gamma_n^B)(\rho^A_i \otimes \rho^B_i)\big] \\
&= \sum_i p_i\, \Tr\!\big[P^A(\gamma_m^A  \gamma_n^B)(\rho^A_i \otimes \rho^B_i)P^A\big] \\
&= \sum_i p_i\, \Tr\!\big[-(\gamma_m^A  \gamma_n^B) \, P^A(\rho^A_i \otimes \rho^B_i)P^A\big] \\
&= - \sum_i p_i\, \Tr\!\big[(\gamma_m^A  \gamma_n^B)(\rho^A_i \otimes \rho^B_i)\big] \\
&= - \Tr\!\big[(\gamma_m^A \gamma_n^B)\, \rho^{AB}\big],
\end{align*}
where in the third line we used $P^A \gamma_m^A P^A = -\gamma_m^A$, and in the fourth line we used that $P^A$ commutes with $\rho^A_i \otimes \rho^B_i$ because $[\rho^A_i,P^A]=0$. Hence
\begin{equation*}
\Tr\!\big[(\gamma_m^A \gamma_n^B)\, \rho^{AB}\big] = 0 .
\end{equation*}

Therefore, in any separable fermionic state obeying the superselection rule, all two-point cross-correlations between $A$ and $B$ vanish. For a fermionic \emph{Gaussian} state $\rho_{\text{G}}^{AB}$, Wick’s theorem implies that second moments determine all higher moments; thus the absence of cross-correlations implies the absence of correlations of any order between $A$ and $B$, and consequently
\begin{equation}
    \rho_{\text{G}}^{AB} = \rho^A_{\text{G}} \otimes \rho^B_{\text{G}} .
\end{equation}
This conclusion is specific to Gaussian states: in general, non-Gaussian separable states may exhibit classical correlations beyond second order even in the absence of entanglement.
\end{proof}

\begin{corollary}
A fermionic Gaussian state with CM $M_{AB}$ is separable if and only if $M_{AB} = M_A \oplus M_B$.
\end{corollary}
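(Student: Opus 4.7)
The corollary follows almost immediately from Theorem~\ref{no-classical} together with the basic correspondence between fermionic Gaussian states and their covariance matrices, so my plan is essentially to chain together these two ingredients with a short check that block-diagonality of the CM is equivalent to the product structure of the underlying Gaussian state.

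For the forward direction, I would begin by assuming $\rho^{AB}_{\mathrm{G}}$ is separable in the $\mathcal{S}2_{\pi}$ sense (which, by the convention fixed right before the corollary, is what ``separable'' means). Then Theorem~\ref{no-classical} immediately gives $\rho^{AB}_{\mathrm{G}} = \rho^{A}_{\mathrm{G}} \otimes \rho^{B}_{\mathrm{G}}$. From here I would compute any cross second moment
\begin{equation*}
X_{mn} = \Tr\!\big(i\gamma_m^A \gamma_n^B\, \rho^{A}_{\mathrm{G}} \otimes \rho^{B}_{\mathrm{G}}\big) = i\,\Tr(\gamma_m^A \rho^{A}_{\mathrm{G}})\,\Tr(\gamma_n^B \rho^{B}_{\mathrm{G}}) = 0,
\end{equation*}
where both single-Majorana expectation values vanish by the fermionic parity superselection rule (odd-weight Majorana monomials have zero trace against any physical state, as noted in Sec.~\ref{subsec:ferm-gaussian}). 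Hence the off-diagonal block $X$ in Eq.~\eqref{eq:main_CM} is zero, which gives $M_{AB} = M_A \oplus M_B$.

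For the converse, I would assume $M_{AB} = M_A \oplus M_B$, i.e.\ $X=0$. Consider the Gaussian state $\rho^A_{\mathrm{G}} \otimes \rho^B_{\mathrm{G}}$ defined by the individual marginals. Its CM is manifestly block-diagonal with blocks $M_A$ and $M_B$, hence coincides with $M_{AB}$. Since a fermionic Gaussian state is uniquely determined by its covariance matrix (Sec.~\ref{subsec:ferm-gaussian}), this forces $\rho^{AB}_{\mathrm{G}} = \rho^{A}_{\mathrm{G}} \otimes \rho^{B}_{\mathrm{G}}$, which is trivially in $\mathcal{S}2_{\pi}$ (a product state is a one-term separable decomposition whose unique component commutes with $P^{AB}$).

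There is no real obstacle here, since all the heavy lifting has already been done in Theorem~\ref{no-classical}. The only point that requires a brief justification is the vanishing of $X$ from the product structure, which is a direct consequence of parity superselection; everything else is a restatement of Theorem~\ref{no-classical} in the CM language, combined with the uniqueness of the Gaussian state associated with a given CM.
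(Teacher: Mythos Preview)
Your proof is correct and follows essentially the same approach as the paper: one direction is immediate from the fact that a block-diagonal CM defines a product Gaussian state, and the other invokes Theorem~\ref{no-classical} to force the cross block $X$ to vanish. The only cosmetic difference is that you and the paper label the two directions oppositely (your ``forward'' is the paper's ``converse''), and you spell out the product $\Rightarrow X=0$ step explicitly via parity superselection where the paper simply cites Theorem~\ref{no-classical}.
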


\begin{proof}
The forward direction is immediate. For the converse, Theorem~5 implies that all two-point correlations between $A$ and $B$ vanish for separable Gaussian states, so the CM is block-diagonal, $M_{AB}=M_A \oplus M_B$.
\end{proof}

\begin{remark}
This motivates a particularly simple entanglement indicator for fermionic Gaussian states:
\begin{equation}
E_{\mathrm{cq}}(\rho_{\text{G}}^{AB}) := \tfrac{1}{2}\, I(A;B)_{\rho_{\text{G}}},
\end{equation}
where $I(A;B)=S(A)+S(B)-S(AB)$ is the quantum mutual information. This quantity
(i) vanishes iff the state is separable,
(ii) coincides with the entropy of entanglement for pure states,
(iii) is efficiently computable from the CM, and
(iv) is monotone under Gaussian LOCC.
In this sense, $E_{\mathrm{cq}}$ cleanly captures the \emph{classical--quantum dichotomy} of fermionic Gaussian correlations: every state is either a product state or contains genuine quantum entanglement. We investigate this as an entanglement measure for fermionic Gaussian states in a forthcoming paper.
\end{remark}

\medskip

\section{Proof of Theorem~6}

Before presenting the proof, we contrast our setting with the related bosonic case~\cite[Theorem~3]{Lami_2019}, noting that while their framework relies on the CM structure specific to bosons, our analysis proceeds differently in the fermionic setting. In the bosonic context, restricting to Gaussian separable states, one can always find a separable Gaussian state whose CM differs from that of the given state by a constant factor. Owing to the commutativity of the two CMs, related techniques can then be applied. The situation for fermionic Gaussian states is markedly different, as we now demonstrate.

\subsection*{Proof of (i): Trace distance}
Let \(\rho^A = \operatorname{tr}_B(\rho^{AB})\) and \(\rho^B = \operatorname{tr}_A(\rho^{AB})\). Since \(\rho^A \otimes \rho^B \in \mathrm{SEP}(A\!:\!B)\), we immediately obtain
\begin{equation}
    \| \rho^{AB}_{G} - \mathrm{SEP}(A\!:\!B) \|_1 
    \;\le\; \| \rho^{AB}_{G} - \rho^A_{\text{G}} \otimes \rho^B_{\text{G}} \|_1 .
\end{equation}
The remaining task is to bound 
\(\| \rho^{AB}_{G} - \rho^A_{\text{G}} \otimes \rho^B_{\text{G}} \|_1\).  
By Lemma~\ref{sm:trace-distance-lemma-upper}, 
\begin{equation}
    \| \rho^{AB}_{G} - \rho^A_{\text{G}} \otimes \rho^B_{\text{G}} \|_1 
    \;\leq\; \tfrac{1}{2} \|M_{\rho_{\text{G}}^{AB}} - M_{\rho_{\text{G}}^A\otimes\rho_{\text{G}}^B}\|_1 .
\end{equation}
Here, \(M_{\rho^A\otimes\rho^B}\) retains the diagonal blocks \(M_A\) and \(M_B\) of \(M_{\rho^{AB}}\) but has zero off-diagonal terms. This corresponds to a separable state with no inter-system correlations. Thus
\[
M_{\rho_{\text{G}}^{AB}} -M_{\rho_{\text{G}}^A\otimes\rho_{\text{G}}^B} 
= \begin{pmatrix} 0 & X \\ -X^{T} & 0 \end{pmatrix}.
\]

The trace norm of the block matrix is determined entirely by the off-diagonal block $X$. We have
\begin{align*}
    \tfrac{1}{2}\,\|M_{\rho^{AB}} - M_{\rho^A \otimes \rho^B}\|_{1}
   & = \frac{1}{2}\left\|\begin{pmatrix}0 & X\\ -X^{ T} & 0\end{pmatrix} \right\|_{1}\\
    &=\frac{\|X\|_{1} + \|X^T\|_{1}}{2}\\
    &=\|X\|_{1}.
\end{align*}
By Lemma~3 of main text, valid for $(k_1,k_2)$-extendible states,
\begin{equation}
X^{ T}X \;\leq\; \tfrac{1}{k_1k_2}\, I ,
\end{equation}
implying that every singular value of $X$ or $X^T$  is at most $1/\sqrt{k_1k_2}$. 
Since there are at most $2\min(n_A,n_B)$ nonzero singular values, we obtain the bound
\begin{equation}
   \tfrac{1}{2}\,\|M_{\rho^{AB}} - M_{\rho^A \otimes \rho^B}\|_{1}
   \;\leq\; \frac{2\,\min(n_A,n_B)}{\sqrt{k_1k_2}}.
\end{equation}

Therefore,
\begin{equation}
    \| \rho^{AB}_{G} - \mathrm{SEP}(A\!:\!B) \|_1 
    \;\leq\; \frac{2\,\min(n_A,n_B)}{\sqrt{k_1 k_2}} 
    = T,
\end{equation}
which establishes the trace-distance bound. 

\subsection*{Continuity bound for mutual information}
The mutual information can be expressed as
\begin{equation}
    I(A;B)_{\rho^{AB}} = S(\rho^A \otimes \rho^B) - S(\rho^{AB}).
\end{equation}
Let
\[
T := \| \rho^A \otimes \rho^B - \rho^{AB} \|_1 .
\]

The Fannes--Audenaert inequality \cite{Audenaert_2007}, states that for two states 
\(\rho,\sigma\) on a \(d\)-dimensional Hilbert space with 
\(\tfrac{1}{2}\|\rho-\sigma\|_1 \le \epsilon\) and \(0 \le \epsilon \le 1\),
\[
|S(\rho)-S(\sigma)| \;\le\; \epsilon \log_2 d + h(\epsilon),
\]
where \(h(\epsilon)=-\epsilon \log_2 \epsilon -(1-\epsilon)\log_2(1-\epsilon)\) is the binary entropy.
Applying this with \(d=2^{n_A+n_B}\) and \(\epsilon=T/2\) gives
\begin{equation}
|S(\rho^A \otimes \rho^B)-S(\rho^{AB})|
   \;\le\; \tfrac{1}{2}(n_A+n_B)T + h(T/2).
\end{equation}
Hence,
\begin{equation}
I(A:B) \;\le\; \tfrac{1}{2}(n_A+n_B)T + h(T/2).
\end{equation}

To ensure the condition \(0 \le \epsilon \le 1\), we define
\begin{equation}
T \;=\; \min\!\left( \frac{2\,\min(n_A,n_B)}{\sqrt{k_1k_2}}, \; 2 \right)
   \;=\; \frac{2}{\sqrt{k_1k_2}}\,\min\!\bigl(n_A,n_B,\sqrt{k_1k_2}\bigr).
\end{equation}
This choice is consistent with the fact that the trace norm distance is at most~2, so we simply set \(T=2\) whenever the bound would exceed this value.

\subsection*{Proof of (ii): Relative entropy of entanglement}
The relative entropy of entanglement is defined as
\begin{equation}
    E_R(\rho^{AB}) := \inf_{\sigma \in \mathrm{SEP}(A:B)} 
    D(\rho^{AB}\,\|\,\sigma),
\end{equation}
where \(D(\rho\|\sigma) = \operatorname{Tr}[\rho(\log \rho - \log \sigma)]\).  

Since the infimum is taken over all separable states, it is upper bounded by choosing \(\sigma = \rho^A \otimes \rho^B\):
\begin{equation}
    E_R(\rho^{AB}_{G}) 
    \;\leq\; D\!\left(\rho^{AB}_{G}\,\big\|\,\rho^A_{\text{G}} \otimes \rho^B_{\text{G}}\right).
\end{equation}
But the right-hand side is exactly the quantum mutual information
\begin{equation}
    I(A:B) := S(\rho_A) + S(\rho_B) - S(\rho^{AB}),
\end{equation}
so that
\begin{equation}
    E_R(\rho^{AB}_{G}) \;\leq\; I(A;B).
\end{equation}
Using the continuity bound derived below, we further obtain
\begin{equation}
    E_R(\rho^{AB}_{G}) \;\leq\; \tfrac{1}{2}(n_A+n_B)T + h(T/2).
\end{equation}

\subsection*{Proof of (iii): Squashed entanglement}
The squashed entanglement is defined as
\begin{equation}
    E_{\mathrm{sq}}(\rho^{AB}) 
    := \tfrac{1}{2} \inf_{\rho^{ABE}} I(A:B|E),
\end{equation}
where the infimum is over extensions \(\rho^{ABE}\) of \(\rho^{AB}\), and  
\begin{equation}
    I(A:B|E) = S(AE)+S(BE)-S(ABE)-S(E)
\end{equation}
is the conditional mutual information.  

By monotonicity of conditional mutual information, 
\begin{equation}
    I(A:B|E) \leq I(A:B).
\end{equation}
Thus
\begin{equation}
    E_{\mathrm{sq}}(\rho^{AB}_{G}) 
    \;\leq\; \tfrac{1}{2} I(A:B).
\end{equation}
Applying the continuity bound below then gives
\begin{equation}
    E_{\mathrm{sq}}(\rho^{AB}_{G}) 
    \;\leq\; \tfrac{1}{4}(n_A+n_B)T + \tfrac{1}{2}h(T/2).
\end{equation}

\section{Construction of Extendible States}

In this section we construct a family of two–mode fermionic Gaussian states. 
We begin by analyzing their extendibility properties and then use them to derive 
lower bounds, with particular attention to the case of two–sided extendibility. 
In this section and the next, we denote a $(k_1,k_2)$–extendible state by $\rho(k_1,k_2)$ 
and its associated CM by $M(k_1,k_2)$.

\subsection{Beam splitter and fermionic pure–loss channel}

Consider the two–mode unitary beam splitter $U_\lambda$, defined by its action on the annihilation operators $\hat{c}_1,\hat{c}_2$ of modes $1$ and $2$:
\begin{align}
    U_{\lambda}^{\dagger}\hat c_1 U_{\lambda} &= \sqrt{\lambda}\,\hat c_1 + \sqrt{1-\lambda}\,\hat c_2,\\
    U_{\lambda}^{\dagger}\hat c_2 U_{\lambda} &= \sqrt{\lambda}\,\hat c_2 - \sqrt{1-\lambda}\,\hat c_1 .
\end{align}
If the second input port is prepared in the vacuum, the first output mode undergoes the fermionic pure–loss channel with transmissivity $\lambda$, denoted $\mathcal{E}_{\lambda}(\cdot)$. For any single–mode input state $\rho=\sum_{m,n=0}^1 \rho_{m,n}\ketbra{m}{n}$, the action is
\begin{align*}
    \mathcal{E}_\lambda(\ketbra{m}{n}) 
    = \sum_{\ell=0}^{\min\{m,n\}}
    \sqrt{\binom{m}{\ell}\binom{n}{\ell}}\,
    (1-\lambda)^{\ell}\,\lambda^{\tfrac{m+n}{2}-\ell}\,
    \ketbra{m-\ell}{n-\ell}.
\end{align*}

\subsection{From EPR pairs to $(k_1,k_2)$–extendible states}

We now use the pure–loss channel to construct a $(k_1,k_2)$–extendible state.  
Starting with an EPR pair, we apply a beam splitter of parameter $\lambda_1$ to subsystem $A$ and another of parameter $\lambda_2$ to subsystem $B$. Tracing through the pure–loss channel action, the EPR state evolves to
\begin{align*}
   \tfrac12\Big(& (1+(1-\lambda_1)(1-\lambda_2))\ketbra{00}{00} 
   + \lambda_2(1-\lambda_1)\ketbra{01}{01} \\
   &+ \lambda_1(1-\lambda_2)\ketbra{10}{10} 
   + \sqrt{\lambda_1\lambda_2}\,(\ket{00}\bra{11}+\ket{11}\bra{00}) 
   + \lambda_1\lambda_2 \ketbra{11}{11}\Big).
\end{align*}

Choosing $\lambda_1=1/k_1$ and $\lambda_2=1/k_2$ yields the CM
\begin{align}
\label{sm:CM-extendible}
    M(k_1,k_2)= 
    \begin{pmatrix} 
        0 & \tfrac{k_1-1}{k_1} & 0 & \tfrac{1}{\sqrt{k_1k_2}} \\
        -\tfrac{k_1-1}{k_1} & 0 & \tfrac{1}{\sqrt{k_1k_2}} & 0 \\
        0 & -\tfrac{1}{\sqrt{k_1k_2}} & 0 & \tfrac{k_2-1}{k_2} \\
        -\tfrac{1}{\sqrt{k_1k_2}} & 0 & -\tfrac{k_2-1}{k_2} & 0
    \end{pmatrix}.
\end{align}
For $k_1=k_2=1$ this reduces to the original EPR CM, which is not extendible.

\subsection{Bona fide condition}

The covariance matrix \(M(k_1,k_2)\) is valid by construction. Indeed, it is obtained from a fermionic Gaussian Bell pair by applying local fermionic pure-loss Gaussian channels with transmissivities \(\lambda_1=1/k_1\) and \(\lambda_2=1/k_2\) to the two subsystems. Fermionic Gaussian channels map valid fermionic Gaussian states to valid fermionic Gaussian states. Hence \(M(k_1,k_2)\) automatically satisfies the bona fide condition.

\subsection{Verification of $(k_1,k_2)$–extendibility}

To show $(k_1,k_2)$–extendibility explicitly, we use Theorem~2. Here
\begin{align*}
    M_A &= \begin{pmatrix} 0 & \tfrac{k_1-1}{k_1} \\[3pt] -\tfrac{k_1-1}{k_1} & 0 \end{pmatrix},\quad
    M_B = \begin{pmatrix} 0 & \tfrac{k_2-1}{k_2} \\[3pt] -\tfrac{k_2-1}{k_2} & 0 \end{pmatrix}, \\
    X &= \begin{pmatrix} 0 & \tfrac{1}{\sqrt{k_1k_2}} \\[3pt] \tfrac{1}{\sqrt{k_1k_2}} & 0 \end{pmatrix}.
\end{align*}
The theorem requires the existence of antisymmetric matrices $Y,Z$ with $I+iY\geq0$ and $I+iZ\geq0$ such that
\begin{align*}
\begin{pmatrix} 
        M_A-Z & 0 & 0 & 0 \\
        0 & M_A + (k_1 - 1)Z & \sqrt{k_1 k_2} X & 0 \\
        0 & -\sqrt{k_1 k_2} X^{T} & M_B + (k_2 - 1)Y & 0 \\
        0 & 0 & 0 & M_B-Y
    \end{pmatrix}
\end{align*}
is a valid CM.  

- For $k_1=k_2=1$, choosing $Y=Z=0$ suffices.  
- For all other cases, one may choose
\begin{align*}
    Z &= \begin{pmatrix} 0 & -1/k_1 \\[3pt] 1/k_1 & 0 \end{pmatrix},\quad
    Y = \begin{pmatrix} 0 & -1/k_2 \\[3pt] 1/k_2 & 0 \end{pmatrix}.
\end{align*}
With these choices the resulting CM corresponds to an EPR pair, and is therefore legitimate. This confirms $(k_1,k_2)$–extendibility.

\medskip

\section{Lower Bounds on the Distance to Fermionic Separable States}
Here we employ the $(k_1,k_2)$-extendible state constructed in the previous section to derive lower bounds on the distance to fermionic separable states. We first present a bound that is tight. We then show that adapting the bosonic strategy of ~\cite{Lami_2019} does not yield a tight lower bound in the fermionic setting; instead, the resulting estimate is weaker by an additional factor of two.

\subsection{Tight Lower Bound from Separable States}

\begin{theorem}
\label{sm:lower-bound}
Let \( \rho \) be a fermionic Gaussian state with CM define above \(M(k_1,k_2)\). Then, the trace distance between \( \rho \) and the set of (possibly non-Gaussian) separable fermionic states satisfies the lower bound
\begin{align*}
       \| \rho - \mathrm{SEP}(A:B) \|_1 \geq \frac{1}{\sqrt{k_1k_2}}.
\end{align*}
\end{theorem}

\begin{proof}
Applying Lemma~\ref{sm:trace-distance-lemma}, we obtain
\begin{align*}
\| M_\rho - M_\sigma \|_\infty \leq \| \rho - \sigma \|_1.
\end{align*}
Minimizing both sides independently over all separable states \( \sigma \), we get
\begin{align*}
\inf_{M_\sigma: \sigma \in \mathrm{SEP}} \|M_\rho - M_\sigma\|_\infty \leq \| \rho - \mathrm{SEP} \|_1.
\end{align*}
Note that the optimal \( \sigma \) minimizing each side may differ. Our goal is to obtain a lower bound on the left-hand side. In fact, we show that the infimum can be computed exactly.

From Theorm~5, any separable (possibly non-Gaussian) fermionic state \( \sigma \) has a correlation matrix of the form
\begin{equation*}
M_\sigma = 
\begin{pmatrix} 
0 & a & 0 & 0 \\
-a & 0 & 0 & 0 \\
0 & 0 & 0 & b \\
0 & 0 & -b & 0
\end{pmatrix},
\end{equation*}
with \( |a|, |b| \leq 1 \). Then,
\begin{align*}
\inf_{M_\sigma:\sigma \in \mathrm{SEP}} \|M_\rho - M_\sigma\|_\infty &= \inf_{|a|\leq 1, |b|\leq 1} \left\|\begin{pmatrix} 
0 & \frac{k_1-1}{k_1} - a & 0 & \frac{1}{\sqrt{k_1k_2}} \\
-\left(\frac{k_1-1}{k_1} - a\right) & 0 & \frac{1}{\sqrt{k_1k_2}} & 0 \\
0 & -\frac{1}{\sqrt{k_1k_2}} & 0 & \frac{k_2-1}{k_2} - b \\
-\frac{1}{\sqrt{k_1k_2}} & 0 & -\left(\frac{k_2-1}{k_2} - b\right) & 0
\end{pmatrix}\right\|_\infty.
\end{align*}

Let us define
\[
\bar{a} := \left( \frac{k_1 - 1}{k_1} - a \right)\sqrt{k_1k_2}, \quad \bar{b} := \left( \frac{k_2 - 1}{k_2} - b \right)\sqrt{k_1k_2}.
\]
Translating the constraints \( |a| \leq 1 \) and \( |b| \leq 1 \) to conditions on \( \bar{a}, \bar{b} \), the problem becomes
\begin{align}
\label{sm:lower-2}
\frac{1}{\sqrt{k_1k_2}} \inf_{\substack{-\sqrt{\frac{k_2}{k_1}} \leq \bar{a} \leq 2\sqrt{k_1k_2}-\sqrt{\frac{k_2}{k_1}}\\ -\sqrt{\frac{k_1}{k_2}} \leq \bar{b} \leq 2\sqrt{k_1k_2}-\sqrt{\frac{k_1}{k_2}}}} \left\|\begin{pmatrix} 
0 & \bar{a} & 0 & 1 \\
-\bar{a} & 0 & 1 & 0 \\
0 & -1 & 0 & \bar{b} \\
-1 & 0 & -\bar{b} & 0
\end{pmatrix}\right\|_\infty = \frac{1}{\sqrt{k_1k_2}}.
\end{align}

The eigenvalues \( \lambda_1, \lambda_2 \) of a real antisymmetric \( 4 \times 4 \) matrix come in purely imaginary conjugate pairs \( \pm i r_1, \pm i r_2 \), with squared moduli given by
\begin{equation}
\lambda_{1,2}^2 = -\frac{\bar{a}^2 + \bar{b}^2}{2} \pm \frac{1}{2}(\bar{a} - \bar{b}) \sqrt{\bar{a}^2 + 2\bar{a}\bar{b} + \bar{b}^2 + 4} - 1.
\end{equation}
The operator norm is the largest singular value, i.e., \( \max\{|r_1|, |r_2|\} \), and since
\[
r_1^2 + r_2^2 = \bar{a}^2 + \bar{b}^2 + 2,
\]
we conclude that
\[
\max\{|r_1|, |r_2|\} \geq 1.
\]
This lower bound is achieved when \( \bar{a} = \bar{b} = 0 \), yielding eigenvalues \( \pm i, \pm i \). Hence, the operator norm is exactly 1, and the infimum in Eq.~\eqref{sm:lower-2} is achieved, completing the proof.
\end{proof}

\begin{proposition}
The trace-distance bound in Theorem~\ref{sm:lower-bound} is tight for two-mode fermionic Gaussian states. Specifically, there exist \((k_1,k_2)\)-extendible Gaussian state \( \rho \), such that:
\begin{align*}
\frac{1}{\sqrt{k_1k_2}} \leq \| \rho - \mathrm{SEP}(A:B) \|_1 \leq \frac{2}{\sqrt{k_1k_2}}.
\end{align*}
\end{proposition}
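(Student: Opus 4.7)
The plan is to instantiate the general bounds we have already developed on the explicit two--mode extendible family $\rho(k_1,k_2)$ constructed in the previous section, whose covariance matrix $M(k_1,k_2)$ is given in Eq.~\eqref{CM-extendible}. Since that construction already verifies via Theorem~\ref{theorem-extendible} that $\rho(k_1,k_2)$ is $(k_1,k_2)$--extendible (with explicit antisymmetric witnesses $Y,Z$), what remains is purely to extract matching upper and lower bounds on its trace distance to the separable set.

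For the upper bound I would directly apply Theorem~\ref{distance-theorem}(i) with $n_A=n_B=1$. The factor $T$ then reads
\begin{align*}
T \;=\; \frac{2}{\sqrt{k_1 k_2}}\,\min\!\bigl(n_A,n_B,\sqrt{k_1 k_2}\bigr) \;=\; \frac{2}{\sqrt{k_1 k_2}}
\end{align*}
for every $k_1,k_2\ge 1$ (since $\sqrt{k_1 k_2}\ge 1\ge\min(n_A,n_B)$ in this case). This gives $\|\rho(k_1,k_2)-\mathrm{SEP}(A\!:\!B)\|_1 \le 2/\sqrt{k_1 k_2}$ with no further computation needed; the argument in part (i) of Theorem~\ref{distance-theorem} already goes through the product marginal $\rho^A_G\otimes\rho^B_G$ and controls the off-diagonal block via Lemma~\ref{k_1k_2-bound}.

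For the lower bound I would simply cite Theorem~\ref{lower-bound}, which was established precisely for this family of states: the operator-norm distance between $M(k_1,k_2)$ and any separable covariance matrix is at least $1/\sqrt{k_1 k_2}$, and Lemma~\ref{trace-distance-lemma} then lifts this to a trace-norm bound. Concatenating the two inequalities yields the desired sandwich
\begin{align*}
\frac{1}{\sqrt{k_1 k_2}} \;\le\; \|\rho(k_1,k_2)-\mathrm{SEP}(A\!:\!B)\|_1 \;\le\; \frac{2}{\sqrt{k_1 k_2}}.
\end{align*}

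There is no genuine obstacle in this proof: every ingredient has been assembled. The only point worth emphasizing in the write-up is that the $\min$ appearing in $T$ collapses to $n_A=n_B=1$ for all nontrivial $k_1,k_2$, which is precisely what makes the family two-mode and lets the upper bound match the lower bound up to a constant factor of two. One could optionally remark that closing the remaining factor-of-two gap would require either sharpening Lemma~\ref{trace-distance-lemma-upper} or improving the separable approximant beyond the product marginal $\rho^A_G\otimes\rho^B_G$; this motivates the discussion of the scaling $1/\sqrt{k_1 k_2}$ already highlighted in the main text.
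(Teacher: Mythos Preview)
Your proposal is correct and follows essentially the same approach as the paper: the paper's proof consists of the single sentence ``The lower bound follows from Theorem~\ref{lower-bound}, and the upper bound from Theorem~\ref{distance-theorem},'' which is precisely what you do, with the added (and correct) observation that $T=2/\sqrt{k_1k_2}$ because $n_A=n_B=1$.
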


\begin{proof}
The lower bound follows from Theorem~\ref{sm:lower-bound}, and the upper bound from Theorem~6.
\end{proof}

\subsection{Sharpness of the Mode-Number Dependence}

We now show that the dependence on the number of modes in Theorem~6 is also
sharp.  For simplicity, and because it is already sufficient to prove
optimality of the scaling, we consider the balanced case
\[
        k_1=k_2=k .
\]
The construction below gives a family of \((k,k)\)-extendible fermionic
Gaussian states on \(r\) modes in \(A\) and \(r\) modes in \(B\), with
\(r=\min(n_A,n_B)\), whose distance from the fermionic separable set is lower
bounded by a constant multiple of \(r/k\).  Since in the balanced case
\[
        \frac{\min(n_A,n_B)}{\sqrt{k_1k_2}}
        =
        \frac{r}{k},
\]
this proves that the mode-number dependence in Theorem~6 cannot be improved,
up to universal constants, before the trivial saturation of the trace norm.

Let \(a_j\) and \(b_j\), \(j=1,\ldots,r\), denote fermionic annihilation
operators for the \(A\)- and \(B\)-modes.  For each pair \((a_j,b_j)\), define
new fermionic modes
\begin{equation}
        c_j=\frac{a_j+b_j}{\sqrt{2}},
        \qquad
        d_j=\frac{a_j-b_j}{\sqrt{2}} .
\end{equation}
Fix
\[
        p=\frac{1}{k}.
\]
On the two modes \((c_j,d_j)\), consider the Gaussian state
\begin{equation}
        \tau_p
        =
        \Big((1-p)\ketbra{0}_{c_j}+p\ketbra{1}_{c_j}\Big)
        \otimes
        \ketbra{0}_{d_j}.
\end{equation}
Equivalently, in the original \(A:B\) mode basis,
\begin{equation}
        \tau_p
        =
        (1-p)\ketbra{00}
        +
        p\ketbra{\psi_+},
        \qquad
        \ket{\psi_+}
        =
        \frac{\ket{10}+\ket{01}}{\sqrt{2}} .
\end{equation}
The state is fermionic Gaussian because it is diagonal in the fermionic normal
modes \(c_j,d_j\).  Its covariance matrix, in the Majorana ordering
\((\gamma_1^A,\gamma_2^A,\gamma_1^B,\gamma_2^B)\), is
\begin{equation}
        M_p
        =
        \begin{pmatrix}
        0 & -(1-p) & 0 & -p \\
        1-p & 0 & p & 0 \\
        0 & -p & 0 & -(1-p) \\
        p & 0 & 1-p & 0
        \end{pmatrix}.
        \label{sm:eq:Mp-sharpness}
\end{equation}
Thus the off-diagonal covariance block has singular values equal to \(p\).

We first verify that \(\tau_p\) is \((k,k)\)-extendible.  In the notation of
Theorem~2, write
\[
        M_A=M_B=
        \begin{pmatrix}
        0 & -(1-p)\\
        1-p & 0
        \end{pmatrix},
        \qquad
        X=
        \begin{pmatrix}
        0 & -p\\
        p & 0
        \end{pmatrix}.
\]
Choose
\[
        \Delta_A=\Delta_B=
        \begin{pmatrix}
        0 & -1\\
        1 & 0
        \end{pmatrix},
\]
which is the covariance matrix of a one-mode vacuum state and hence is a valid
fermionic covariance matrix.  Since \(p=1/k\), we have
\begin{equation}
        kM_A-(k-1)\Delta_A=0,
        \qquad
        kM_B-(k-1)\Delta_B=0,
\end{equation}
and
\[
        \sqrt{k^2}\,X
        =
        kX
        =
        \begin{pmatrix}
        0 & -1\\
        1 & 0
        \end{pmatrix}.
\]
Therefore the effective covariance matrix appearing in Theorem~2 is
\begin{equation}
        \begin{pmatrix}
        0 & kX\\
        -kX^T & 0
        \end{pmatrix},
\end{equation}
which is a valid covariance matrix, since its singular values are all equal to
one.  Hence \(\tau_p\) is \((k,k)\)-extendible.

Now define the \(r\)-pair Gaussian state
\begin{equation}
        \rho_{r,k}^{AB}
        :=
        \tau_p^{\otimes r},
        \qquad p=\frac{1}{k}.
\end{equation}
It is Gaussian, with covariance matrix
\[
        M_{\rho_{r,k}}
        =
        \bigoplus_{j=1}^{r} M_p .
\]
It is also \((k,k)\)-extendible, because the tensor product of the
\((k,k)\)-extensions of the individual copies gives a \((k,k)\)-extension of
the tensor-product state.  If \(n_A\) or \(n_B\) is larger than \(r\), we may
tensor with additional uncorrelated local vacuum modes; this does not change
the distance estimate below.

We now prove a lower bound on the distance from \(\rho_{r,k}^{AB}\) to the
fermionic separable set.  Let \(\ket{\Omega}\) denote the global vacuum of the
\(a_j,b_j\) modes, and define the one-particle states
\begin{equation}
        \ket{A_j}=a_j^\dagger\ket{\Omega},
        \qquad
        \ket{B_j}=b_j^\dagger\ket{\Omega},
        \qquad j=1,\ldots,r.
\end{equation}
Let \(P_1\) be the projection onto the one-particle subspace
\[
        \mathcal{H}_1
        =
        \operatorname{span}\{\ket{A_j},\ket{B_j}:j=1,\ldots,r\}.
\]
Since \(\rho_{r,k}^{AB}\) is a product of the states \(\tau_p\), its restriction
to the one-particle sector is
\begin{equation}
        P_1\rho_{r,k}^{AB}P_1
        =
        p(1-p)^{r-1}
        \sum_{j=1}^{r}
        \ketbra{\psi_j},
        \qquad
        \ket{\psi_j}
        =
        \frac{\ket{A_j}+\ket{B_j}}{\sqrt{2}} .
        \label{sm:eq:one-particle-block}
\end{equation}
Therefore the off-diagonal block of \(P_1\rho_{r,k}^{AB}P_1\), between
\(\operatorname{span}\{\ket{A_j}\}_{j=1}^{r}\) and
\(\operatorname{span}\{\ket{B_j}\}_{j=1}^{r}\), is
\begin{equation}
        C_\rho
        =
        \frac{p(1-p)^{r-1}}{2}\, I_r .
        \label{sm:eq:C-rho-sharpness}
\end{equation}

On the other hand, let \(\sigma^{AB}\in \mathrm{SEP}(A:B)\) be any fermionic
separable state, in the sense used throughout this work: it is a convex
combination of physical product states,
\begin{equation}
        \sigma^{AB}
        =
        \sum_\ell q_\ell\,\sigma_\ell^A\otimes\sigma_\ell^B,
        \qquad
        [\sigma_\ell^A,P^A]=0,\quad [\sigma_\ell^B,P^B]=0 .
\end{equation}
For every \(j,\ell\), the matrix element
\(\bra{A_j}\sigma^{AB}\ket{B_\ell}\) vanishes.  Indeed, in a product component,
\[
        \bra{A_j}
        \bigl(\sigma_m^A\otimes\sigma_m^B\bigr)
        \ket{B_\ell}
        =
        \bra{A_j}\sigma_m^A\ket{\Omega_A}\,
        \bra{\Omega_B}\sigma_m^B\ket{B_\ell}.
\]
The first factor connects opposite local parity sectors of \(A\), and the
second factor connects opposite local parity sectors of \(B\).  Since
\(\sigma_m^A\) and \(\sigma_m^B\) commute with the corresponding local parity
operators, both such parity-changing matrix elements vanish.  Hence the
off-diagonal block of \(P_1\sigma^{AB}P_1\) between the \(A\)-one-particle and
\(B\)-one-particle subspaces is zero.

It follows that the corresponding off-diagonal block of
\(P_1(\rho_{r,k}^{AB}-\sigma^{AB})P_1\) is exactly \(C_\rho\).  For a Hermitian
block matrix
\[
        H=
        \begin{pmatrix}
        H_A & C\\
        C^\dagger & H_B
        \end{pmatrix},
\]
one has
\[
        \|H\|_1
        \geq
        \left\|
        \begin{pmatrix}
        0 & C\\
        C^\dagger & 0
        \end{pmatrix}
        \right\|_1
        =
        2\|C\|_1 .
\]
The first inequality follows by applying the contraction
\(H\mapsto (H-UHU)/2\), where \(U=I_A\oplus(-I_B)\), and the equality follows
from the singular-value decomposition of \(C\).

Using also \(\|P_1HP_1\|_1\leq \|H\|_1\), we obtain, for every separable
\(\sigma^{AB}\),
\begin{align}
        \|\rho_{r,k}^{AB}-\sigma^{AB}\|_1
        &\geq
        \|P_1(\rho_{r,k}^{AB}-\sigma^{AB})P_1\|_1  \\
        &\geq
        2\|C_\rho\|_1 \\
        &=
        2\left\|
        \frac{p(1-p)^{r-1}}{2}I_r
        \right\|_1 \\
        &=
        r\,p(1-p)^{r-1}.
\end{align}
Taking the infimum over all separable \(\sigma^{AB}\), and substituting
\(p=1/k\), gives
\begin{equation}
        \|\rho_{r,k}^{AB}-\mathrm{SEP}(A:B)\|_1
        \geq
        \frac{r}{k}\left(1-\frac{1}{k}\right)^{r-1}.
        \label{sm:eq:sharpness-lower-rk}
\end{equation}
In the nontrivial regime \(r/k\leq 1/2\), Bernoulli's inequality gives
\[
        \left(1-\frac{1}{k}\right)^{r-1}
        \geq
        1-\frac{r-1}{k}
        \geq
        \frac{1}{2},
\]
and therefore
\begin{equation}
        \|\rho_{r,k}^{AB}-\mathrm{SEP}(A:B)\|_1
        \geq
        \frac{r}{2k}.
        \label{sm:eq:sharpness-lower-final}
\end{equation}

Combining this lower bound with the upper bound of Theorem~6,
\[
        \|\rho_{r,k}^{AB}-\mathrm{SEP}(A:B)\|_1
        \leq
        \frac{2r}{k},
\]
we conclude that the dependence on the number of modes is sharp, up to
universal numerical constants, in the regime before the trace norm saturates.
Equivalently, since \(r=\min(n_A,n_B)\) and \(k=\sqrt{k_1k_2}\) in the balanced
case \(k_1=k_2=k\), the scaling
\[
        \frac{\min(n_A,n_B)}{\sqrt{k_1k_2}}
\]
in Theorem~6 cannot be improved in general.

\subsection{Adapting the Bosonic Strategy}
To estimate \( \left\| \rho(k_1,k_2) - \text{SEP} \right\|_1 \), we first recall that for any traceless operator \( X \) it holds that \( \|X\|_1 \geq 2\|X\|_\infty \). This directly yields the lower bound
\begin{equation}
\left\| \rho(k_1,k_2) - \text{SEP} \right\|_1 \geq 2 \left\| \rho(k_1,k_2) - \text{SEP} \right\|_\infty .
\end{equation}

From Lemma \ref{sm:convex-set} we know that, for every bipartite pure state \( |\Psi\rangle \) with maximal Schmidt coefficient \( \lambda_1(\Psi) \), the inequality
\(
\langle \Psi | \sigma | \Psi \rangle \leq \lambda_1(\Psi)
\)
holds for all separable states \( \sigma \). Hence,
\begin{equation}
\left\| \rho(k_1,k_2) - \text{SEP} \right\|_\infty 
= \inf_{\sigma \in \text{SEP}} \left\| \rho(k_1,k_2) - \sigma \right\|_\infty ,
\end{equation}
\begin{equation}
= \inf_{\sigma \in \text{SEP}} \sup_{|\Psi\rangle} \left| \langle \Psi | \left( \rho(k_1,k_2) - \sigma \right) | \Psi \rangle \right| ,
\end{equation}
\begin{equation}
\geq \sup_{|\Psi\rangle} \left( \bra{\Psi} \rho(k_1,k_2) | \Psi \rangle - \lambda_1(\Psi) \right).
\end{equation}

We now specialize to the case where \( |\Psi\rangle \) is chosen as the EPR pair 
\(
\frac{1}{\sqrt{2}}(\ket{00}+\ket{11}),
\)
which satisfies \( \lambda_1(\Psi) = \tfrac{1}{2} \). In this case,
\(
\langle \Psi | \rho(k_1,k_2) | \Psi \rangle = \tr\!\left( \rho(k_1,k_2)\rho(1,1) \right).
\)

For two fermionic Gaussian states \(\rho(k_1,k_2)\) and \(\rho(1,1)\) with CMs \(M(k_1,k_2)\) and \(M(1,1)\), respectively, one has the identity
\begin{equation}
    \big|\tr(\rho(k_1,k_2)\rho(1,1))\big| 
    = \sqrt{\det\!\left(\tfrac{1}{2}\big(M(k_1,k_2)M(1,1)-I\big)\right)} .
\end{equation}

Applying this relation, we obtain
\begin{align}
     \sqrt{ \det\!\left( \tfrac{1}{2}\big(M(k_1,k_2)M(1,1) - I\big) \right) }
     = \frac{1}{4}\left(\frac{2k_1k_2-k_1-k_2+2}{k_1k_2}+\frac{2}{\sqrt{k_1k_2}}\right) -\frac{1}{2}.
\end{align}

It is straightforward to verify that this bound is always weaker than the one derived in Proposition \ref{sm:lower-bound}. For the symmetric case \(k_1=k_2=k\), the expression simplifies to
\begin{equation}
    \sqrt{ \det\!\left( \tfrac{1}{2}\big(M(k,k)M(1,1) - I\big) \right) }
    = \frac{1}{2}\left(1+\frac{1}{k^2}\right),
\end{equation}
which leads to
\begin{equation}
\left\| \rho(k) - \text{SEP} \right\|_1 \geq \frac{1}{k^2}.
\end{equation}

Recall that the bound from Theorem \ref{sm:lower-bound} in the same setting is \(\tfrac{1}{k}\). Therefore,
\begin{equation}
    \frac{1}{k} \geq \frac{1}{k^2}, \quad \text{for all } k \geq 1.
\end{equation}
We emphasize that this inequality reflects more than a difference in constant prefactors: it highlights a change in scaling behavior, with the first bound decaying linearly in \(k\), while the second decays quadratically.

\medskip

\section{Independence of Extendibility from Opposite Subsystems}  

This section establishes that extendibility constraints can behave in a strikingly asymmetric manner. 
A natural question is the following: given extendibility from one subsystem, what can be inferred about extendibility from the other? Surprisingly, the two sides turn out to be completely independent. An extension from one side imposes no constraint on the other, except in the trivial case where one side is infinitely extendible. This is counterintuitive, as one might expect that extendibility from one side should push the state closer to separable, and that, due to the scarcity of entanglement, this would in turn enforce extendibility from the other side. Instead, the extendibility conditions for each side are decoupled.  
Throughout this section, we use the shorthand ``1-extendible'' interchangeably with ``unextendible.''  

\begin{figure}[h!]
\centering
\begin{tikzpicture}[scale=0.7, x=2.4cm,y=-2.4cm,
  arrInt/.style={->, very thick},
  arrInf/.style={<->, very thick},
  cont/.style={dotted, thick},
  every node/.style={font=\small}
]
\node (A11) at (0,0) {\textbf{$(1,1)$}};
\node (A21) at (1,0) {\textbf{$(2,1)$}};
\node (A31) at (2,0) {\textbf{$(3,1)$}};

\node (A12) at (0,1) {\textbf{$(1,2)$}};
\node (A22) at (1,1) {\textbf{$(2,2)$}};
\node (A32) at (2,1) {\textbf{$(3,2)$}};

\node (A13) at (0,2) {\textbf{$(1,3)$}};
\node (A23) at (1,2) {\textbf{$(2,3)$}};
\node (A33) at (2,2) {\textbf{$(3,3)$}};

\foreach \m in {1,2,3} {
  \draw[arrInt] (A2\m) -- (A1\m);
  \draw[arrInt] (A3\m) -- (A2\m);
}
\foreach \n in {1,2,3} {
  \draw[arrInt] (A\n2) -- (A\n1);
  \draw[arrInt] (A\n3) -- (A\n2);
}

\node (R1) at (3.2,0) {\textbf{$(\infty,1)$}};
\node (R2) at (3.2,1) {\textbf{$(\infty,2)$}};
\node (R3) at (3.2,2) {\textbf{$(\infty,3)$}};

\draw[cont] (A31) -- (R1);
\draw[cont] (A32) -- (R2);
\draw[cont] (A33) -- (R3);

\draw[arrInf] (R1) -- (R2);
\draw[arrInf] (R2) -- (R3);

\node (C1) at (0,3.2) {\textbf{$(1,\infty)$}};
\node (C2) at (1,3.2) {\textbf{$(2,\infty)$}};
\node (C3) at (2,3.2) {\textbf{$(3,\infty)$}};

\draw[cont] (A13) -- (C1);
\draw[cont] (A23) -- (C2);
\draw[cont] (A33) -- (C3);

\draw[arrInf] (C1) -- (C2);
\draw[arrInf] (C2) -- (C3);

\node (B) at (3.2,3.2) {\textbf{$(\infty,\infty)$}};
\draw[arrInf] (R3) -- (B);
\draw[arrInf] (C3) -- (B);

\def\xleft{-0.5}   
\def\xright{2.8}   
\def\xinner{3.7}   
\def\ybottom{-0.5} 
\def\yinner{3.6}   
\def\ytop{2.7}     

\draw[red, very thick, dashed]
  (\xleft,\ytop) -- (\xright,\ytop) -- (\xright,\ybottom)
  -- (\xinner,\ybottom) -- (\xinner,\yinner) -- (\xleft,\yinner) -- cycle;


\end{tikzpicture}
\caption{
Hierarchy of $(k_1,k_2)$-extendible sets. The diagram depicts the lattice of extendibility constraints, where the horizontal and vertical directions correspond to increasing $k_1$ and $k_2$, respectively. Arrows indicate strict inclusion of sets: larger $k_i$ yield strictly smaller families (as proved by Lemma~\ref{sm:Hierarch}). Notably, extendibility from one side imposes no constraint on the other, so the two directions are independent, except in the limiting case $k_1\to\infty$ or $k_2\to\infty$, where the sets coincide with the separable states, highlighted by the red dashed boundary.}

    \label{sm:fig:geometry_schematic}
\end{figure}

To illustrate this independence, consider the following family of two-mode fermionic Gaussian states.  For \(k_1,k_2\geq 1\), define
\begin{equation}
\label{sm:CM-extendible-epsilon}
    M_\epsilon(k_1,k_2)
    =
    \begin{pmatrix}
    \left(1-\frac{\epsilon}{k_1}\right)i\sigma_y
    &
    \frac{\sqrt{2\epsilon-\epsilon^2}}{\sqrt{k_1k_2}}\,\sigma_x
    \\
    -\frac{\sqrt{2\epsilon-\epsilon^2}}{\sqrt{k_1k_2}}\,\sigma_x
    &
    \left(1-\frac{\epsilon}{k_2}\right)i\sigma_y
    \end{pmatrix},
\end{equation}
where \(0<\epsilon<1\), and \(\sigma_x,\sigma_y\) are Pauli matrices.  We choose \(\epsilon\) sufficiently small so that
\begin{equation}
\label{sm:epsilon-condition-hierarch}
0<\epsilon<
\min\left\{
1,\,
\frac{2k_2}{k_1k_2+k_2-k_1},\,
\frac{2k_1}{k_1k_2+k_1-k_2}
\right\}.
\end{equation}

\begin{lemma}
The state with CM \(M_\epsilon(k_1,k_2)\) as above is \((k_1,k_2)\)-extendible but not \((k_1+1,1)\)- or \((1,k_2+1)\)-extendible.
\label{sm:Hierarch}
\end{lemma}

\begin{proof}
We first show that \(M_\epsilon(k_1,k_2)\) is \((k_1,k_2)\)-extendible.  In the notation of Theorem~2, its blocks are
\begin{align}
    M_A
    &=
    \left(1-\frac{\epsilon}{k_1}\right)i\sigma_y,
    \qquad
    M_B
    =
    \left(1-\frac{\epsilon}{k_2}\right)i\sigma_y,
    \\
    X
    &=
    \frac{\sqrt{2\epsilon-\epsilon^2}}{\sqrt{k_1k_2}}\,\sigma_x .
\end{align}
Choose
\begin{equation}
    \Delta_A=i\sigma_y,
    \qquad
    \Delta_B=i\sigma_y .
\end{equation}
Then
\begin{align}
    k_1M_A-(k_1-1)\Delta_A
    &=
    (1-\epsilon)i\sigma_y,
    \\
    k_2M_B-(k_2-1)\Delta_B
    &=
    (1-\epsilon)i\sigma_y,
    \\
    \sqrt{k_1k_2}\,X
    &=
    \sqrt{2\epsilon-\epsilon^2}\,\sigma_x .
\end{align}
Thus the effective CM appearing in Theorem~2 is
\begin{equation}
    \begin{pmatrix}
    (1-\epsilon)i\sigma_y
    &
    \sqrt{2\epsilon-\epsilon^2}\,\sigma_x
    \\
    -\sqrt{2\epsilon-\epsilon^2}\,\sigma_x
    &
    (1-\epsilon)i\sigma_y
    \end{pmatrix}.
\end{equation}
Using
\[
(i\sigma_y)^2=-I,\qquad
\sigma_x^2=I,\qquad
(i\sigma_y)\sigma_x+\sigma_x(i\sigma_y)=0,
\]
and
\[
(1-\epsilon)^2+(2\epsilon-\epsilon^2)=1,
\]
one checks that the square of this matrix is \(-I\).  Hence it is a valid pure fermionic CM.  Therefore, by Theorem~2, \(M_\epsilon(k_1,k_2)\) is \((k_1,k_2)\)-extendible.

We now show that it is not \((k_1+1,1)\)-extendible.  Suppose, for contradiction, that a \((k_1+1,1)\)-extension exists.  Applying Theorem~2 with parameters \((k_1+1,1)\), the corresponding effective CM would contain the \(B\)-block
\[
    M_B
    =
    \left(1-\frac{\epsilon}{k_2}\right)i\sigma_y
\]
and the off-diagonal block
\[
    \sqrt{k_1+1}\,X
    =
    \sqrt{\frac{(k_1+1)(2\epsilon-\epsilon^2)}{k_1k_2}}\,\sigma_x .
\]
For any valid fermionic CM, Lemma~\ref{sm:column-sum} implies that the sum of the squared entries in each column is at most one.  Applying this to either column of the \(B\)-subsystem in the effective CM gives the necessary condition
\begin{equation}
    \left(1-\frac{\epsilon}{k_2}\right)^2
    +
    \frac{(k_1+1)(2\epsilon-\epsilon^2)}{k_1k_2}
    \leq 1 .
\end{equation}
For \(\epsilon>0\), this inequality is equivalent to
\begin{equation}
    \epsilon
    \geq
    \frac{2k_2}{k_1k_2+k_2-k_1}.
\end{equation}
This contradicts the choice of \(\epsilon\) in Eq.~\eqref{sm:epsilon-condition-hierarch}.  Hence the state is not \((k_1+1,1)\)-extendible.

The proof that the state is not \((1,k_2+1)\)-extendible is analogous.  If such an extension existed, applying Lemma~\ref{sm:column-sum} to either column of the \(A\)-subsystem in the corresponding effective CM would give
\begin{equation}
    \left(1-\frac{\epsilon}{k_1}\right)^2
    +
    \frac{(k_2+1)(2\epsilon-\epsilon^2)}{k_1k_2}
    \leq 1 .
\end{equation}
For \(\epsilon>0\), this is equivalent to
\begin{equation}
    \epsilon
    \geq
    \frac{2k_1}{k_1k_2+k_1-k_2},
\end{equation}
again contradicting Eq.~\eqref{sm:epsilon-condition-hierarch}.  Therefore the state is not \((1,k_2+1)\)-extendible.
\end{proof}

\begin{corollary}
For every finite \(k\), there exists a fermionic Gaussian state that is \((1,k)\)-extendible but not \((2,1)\)-extendible.  Similarly, there exists a fermionic Gaussian state that is \((k,1)\)-extendible but not \((1,2)\)-extendible.
\end{corollary}

The family \(M_\epsilon(k_1,k_2)\) therefore shows that extendibility constraints from opposite subsystems are independent: being extendible to \(k_1\) copies on one side and \(k_2\) copies on the other does not force the next nontrivial extension from either opposite side.

\begin{theorem}
For every \(\varepsilon > 0\), there exists a fermionic Gaussian state \(\rho_\varepsilon\) such that \(\|\rho_\varepsilon - \mathrm{SEP}\|_{1} \leq \varepsilon\), but which is not extendible from either side. In particular, it admits only a 1-extension.  
\end{theorem}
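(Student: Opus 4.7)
The plan is to realize $\rho_\varepsilon$ as a tensor product of two oppositely non-extendible two-mode fermionic Gaussian states, each individually close to separable. For a large integer $k$ to be chosen, let $\rho_{ab}$ be the two-mode Gaussian state on modes $(a,b)$ with CM $M(k,1)$, and $\rho_{cd}$ the two-mode state on $(c,d)$ with CM $M(1,k)$, where $M(k_1,k_2)$ is the family from Eq.~\eqref{CM-extendible}. I set $\rho_\varepsilon := \rho_{ab}\otimes\rho_{cd}$, regarded as a bipartite Gaussian state on Alice's system $A=\{a,c\}$ versus Bob's $B=\{b,d\}$; its CM is $M(k,1)\oplus M(1,k)$ up to mode reordering, which is manifestly a valid fermionic CM.

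For the distance bound, Theorem~\ref{distance-theorem}(i) applied to each factor (with $n_A=n_B=1$ and one-sided parameter $k$) gives $\|\rho_{ab} - \rho_a\otimes\rho_b\|_1 \leq 2/\sqrt{k}$ and analogously for $\rho_{cd}$. The state $(\rho_a\otimes\rho_c)\otimes(\rho_b\otimes\rho_d)$ is a product state across $(A:B)$ and hence belongs to $\mathcal{S}2_\pi$, so subadditivity of trace distance under tensor products yields $\|\rho_\varepsilon - \mathrm{SEP}(A:B)\|_1 \leq 4/\sqrt{k}$. Taking $k\geq 16/\varepsilon^2$ achieves the desired proximity.

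For the non-extendibility, suppose towards a contradiction that $\widetilde\rho$ is a $(2,1)$-extension of $\rho_\varepsilon$, living on $\{a^1,c^1,a^2,c^2,b,d\}$ with the two Alice copies symmetric. Tracing out $a^1,a^2,b$ produces a state on $\{c^1,c^2,d\}$ that remains symmetric under $c^1\!\leftrightarrow\! c^2$, and whose $c^i d$ marginal is obtained by further marginalizing $\widetilde\rho^{A^iB}=\rho_\varepsilon=\rho_{ab}\otimes\rho_{cd}$ over $a^i,b$—which factors to give $\rho_{cd}$. Hence $\rho_{cd}$ would be $(2,1)$-extendible, contradicting Lemma~\ref{Hierarch} since $M(1,k)$ is not. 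A symmetric argument with the roles of $A,B$ and of $\rho_{ab},\rho_{cd}$ swapped rules out $(1,2)$-extendibility. Because any $(k_1,k_2)$-extension with $k_1\!\geq\!2$ or $k_2\!\geq\!2$ would reduce by marginalization to a $(2,1)$- or $(1,2)$-extension, only the trivial $(1,1)$-extension survives.

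The one step demanding genuine care is the partial-trace reduction—checking that marginalizing an extension of a tensor-product state onto a factor yields an extension of that factor—which uses both the factorized form of the extension marginals and the fact that partial traces over symmetrized modes preserve permutation symmetry. Everything else is a direct invocation of Theorem~\ref{distance-theorem} and Lemma~\ref{Hierarch}, together with the standard subadditivity of trace distance under tensor products.
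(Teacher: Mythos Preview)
Your argument is correct and takes a genuinely different route from the paper's proof. The paper constructs an explicit \emph{two-mode} state with CM
\[
C_\varepsilon=
\begin{pmatrix}
0 & \sqrt{1-(\varepsilon/2)^2} & \varepsilon/2 & 0\\
-\sqrt{1-(\varepsilon/2)^2} & 0 & 0 & \varepsilon/2\\
-\varepsilon/2 & 0 & 0 & \sqrt{1-(\varepsilon/2)^2}\\
0 & -\varepsilon/2 & -\sqrt{1-(\varepsilon/2)^2} & 0
\end{pmatrix},
\]
and checks non-extendibility by showing directly, via Lemma~\ref{column-sum}, that the putative \((1,2)\)-extension CM fails to be bona fide (the second column has squared entries summing above~1); symmetry of \(C_\varepsilon\) under \(A\leftrightarrow B\) then rules out \((2,1)\). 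Proximity to \(\mathrm{SEP}\) is read off from the smallness of the off-diagonal block.

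Your construction is instead \emph{four-mode} and modular: you tensor together two states from the family \(M(k_1,k_2)\), one obstructing \(A\)-side extension and the other obstructing \(B\)-side extension, and then black-box Lemma~\ref{Hierarch} and Theorem~\ref{distance-theorem}. The reduction step---tracing an extension of a product over one factor's modes to obtain an extension of the other factor---is exactly the right mechanism, and the paper's remark that permutation symmetry of extensions can be dropped (and recovered by twirling) makes the marginal check alone sufficient. The payoff of your approach is that it reuses results already proved and exposes the logical structure (asymmetric obstructions combine); the payoff of the paper's approach is a minimal \((1{+}1)\)-mode witness and a self-contained computation. One minor imprecision: the inequality \(\|\rho_{ab}-\rho_a\otimes\rho_b\|_1\le 2/\sqrt{k}\) you invoke is established in the \emph{proof} of Theorem~\ref{distance-theorem}(i) (via Lemma~\ref{trace-distance-lemma-upper} and Lemma~\ref{k_1k_2-bound}), not in its statement, so you should cite that intermediate step rather than the theorem itself.
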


\begin{proof}
The proof is constructive. Consider the family of CMs
\begin{equation}
    C_\varepsilon = 
    \begin{pmatrix}
    0 & \sqrt{1-(\frac{\varepsilon}{2})^2} & \frac{\varepsilon}{2} & 0 \\
    -\sqrt{1-(\frac{\varepsilon}{2}) ^2} & 0 & 0 & \frac{\varepsilon}{2}  \\
    -\frac{\varepsilon}{2}  & 0 & 0 & \sqrt{1-(\frac{\varepsilon}{2}) ^2} \\
    0 & -\frac{\varepsilon}{2}  & -\sqrt{1-(\frac{\varepsilon}{2}) ^2} & 0
    \end{pmatrix}.
\end{equation}

One readily checks that \(C_\epsilon\) defines a valid fermionic CM. In the limit \(\epsilon \to 0\), the corresponding state approaches separability. However, for any fixed \(\epsilon > 0\), the state remains unextendible.  

The first nontrivial extendibility conditions arise at the \((1,2)\) and \((2,1)\) levels. Since any higher-order extension would imply one of these, it suffices to rule out both \((1,2)\)- and \((2,1)\)-extendibility to establish that the state is only \((1,1)\)-extendible.  

Decomposing \(C_\epsilon\) into blocks, we write
\[
M_A = M_B = 
\begin{pmatrix}
0 & \sqrt{1 - (\frac{\varepsilon}{2})^2} \\
- \sqrt{1 - (\frac{\varepsilon}{2})^2} & 0
\end{pmatrix}, \quad
X = 
\begin{pmatrix}
\frac{\varepsilon}{2} & 0 \\
0 & \frac{\varepsilon}{2}
\end{pmatrix}.
\]

By Theorem~2, for \((k_1, k_2) = (1,2)\), the state would be extendible if and only if there exist real antisymmetric matrices \(Y = -Y^T\) and \(Z = -Z^T\) such that the following \(8 \times 8\) block matrix defines a valid fermionic CM (see Eq.~\eqref{sm:eq:new_CM}):
\begin{equation}
\begin{pmatrix} 
M_A - Z & 0 & 0 & 0 \\
0 & M_A & \sqrt{2}X & 0 \\
0 & -\sqrt{2}X^{T} & M_B + Y & 0 \\
0 & 0 & 0 & M_B - Y
\end{pmatrix}.
\end{equation}

Applying Lemma~\ref{sm:column-sum} to the second row shows that this matrix fails to satisfy the CM condition for any choice of \(Y\) and \(Z\). Hence the state is not \((1,2)\)-extendible. By symmetry between subsystems \(A\) and \(B\), it is also not \((2,1)\)-extendible.  

We conclude that \(\rho_\epsilon\) is only \((1,1)\)-extendible, despite being arbitrarily close to separable.
\end{proof}

\section{Numerical illustration: disordered Kitaev chain}

To complement the general covariance-matrix characterization developed in the main text, we include a simple
many-body example based on a disordered quadratic fermionic model.  The purpose of this appendix is not to
analyze the phase diagram of the model, but rather to illustrate concretely how our extendibility
criterion can be applied to reduced states arising from a standard many-body system.

Specifically, we ask the following question: given the ground state of a disordered free-fermion Hamiltonian,
how often is the reduced state on two neighboring sites \((k_1,k_2)\)-extendible?  Since extendibility provides a natural way to probe how local correlations are modified, we apply this to understand how localized the correlations are as disorder increases.  For the example
considered below, we find that the probability of extendibility increases with disorder strength, consistent
with the intuition that stronger disorder suppresses inter-site correlations and tends to localize them.

As a concrete testbed, we consider the standard disordered Kitaev chain, namely a one-dimensional spinless
\(p\)-wave superconductor.  At the symmetric point \(t=\Delta=1\), the Hamiltonian on a ring of length \(N\)
is
\begin{equation}
\label{sm:eq:disordered_kitaev_chain}
H \;=\; \sum_{j=1}^{N}\Big[-(c_j^\dagger c_{j+1} + c_{j+1}^\dagger c_j)
\;+\; (c_j c_{j+1} + c_{j+1}^\dagger c_j^\dagger)\Big]
\;-\;\sum_{j=1}^{N}\mu_j\Big(c_j^\dagger c_j-\tfrac12\Big),
\end{equation}
with periodic boundary conditions (\(c_{N+1}\equiv c_1\)).  The onsite chemical potentials are sampled
independently from the uniform distribution
\[
\mu_j \sim \mathrm{Unif}[-W,W],
\]
where \(W\ge 0\) denotes the disorder strength.

Because the model is quadratic, its many-body ground state is a fermionic Gaussian state and is therefore
completely determined by its Majorana covariance matrix.  Writing the Majorana operators as
\(\gamma=(\gamma_1,\dots,\gamma_{2N})^{\mathsf T}\), any quadratic Hamiltonian can be expressed as
\begin{equation}
H \;=\; \frac{i}{4}\,\gamma^{\mathsf T} A\,\gamma,
\end{equation}
where \(A\) is a real antisymmetric \(2N\times 2N\) matrix.  For a state \(\rho\), the corresponding Majorana
covariance matrix is defined by
\begin{equation}
\Gamma_{pq} \;:=\; \frac{i}{2}\,\mathrm{Tr}\!\big(\rho\,[\gamma_p,\gamma_q]\big)
\;=\; \mathrm{Tr}\!\big(\rho\, i\gamma_p\gamma_q\big)\qquad (p\neq q),\qquad \Gamma_{pp}=0.
\end{equation}
For a pure Gaussian ground state, \(\Gamma\) is real antisymmetric and satisfies \(\Gamma^2=-I\).

In the numerics, we compute the ground-state covariance matrix \(\Gamma\) of the quadratic Hamiltonian from the Hermitian single-particle generator \(K:=iA\) via
\begin{equation}
\Gamma \;=\; i\,\mathrm{sgn}(K),
\end{equation}
where \(\mathrm{sgn}(K)\) is obtained by diagonalizing \(K\) and replacing each eigenvalue by its sign.

\begin{figure}[t]
  \centering
  \includegraphics[width=0.5\linewidth]{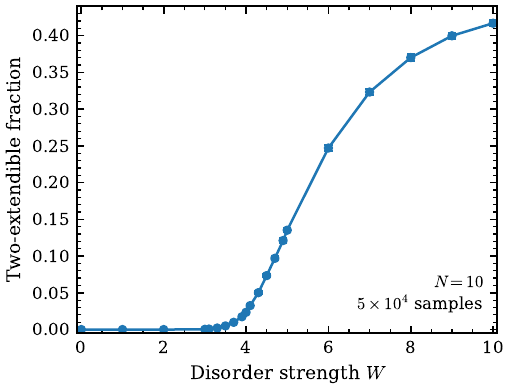}
  \caption{
    \textbf{Disordered Kitaev chain: two-site \((k_1,k_2)\)-extendibility versus disorder.}
    For the disordered Kitaev chain in Eq.~\eqref{sm:eq:disordered_kitaev_chain} with periodic boundary conditions and
    i.i.d.\ chemical-potential disorder \(\mu_j\sim\mathrm{Unif}[-W,W]\), we compute the Gaussian ground-state
    covariance matrix, reduce it to two adjacent sites, and test two-sided \((k_1,k_2)\)-extendibility of the
    resulting two-site Gaussian state using the covariance-matrix criterion developed in the main text.
    The plotted quantity is the empirical fraction \(N_{\mathrm{ext}}(W)/N_{\mathrm{samp}}\) of disorder
    realizations for which the reduced state is \((1,2)\)-extendible. Error bars denote \(95\%\) Wilson score
    intervals. Parameters are \(N=10\), \(t=\Delta=1\), \((k_1,k_2)=(1,2)\), and
    \(N_{\mathrm{samp}}=50\,000\) independent disorder realizations for each value of \(W\).
    }
  \label{sm:fig:disordered_kitaev_two_site_extendibility}
\end{figure}

For each value of the disorder strength \(W\), we sample \(N_{\mathrm{samp}}\) independent realizations and
define
\begin{equation}
p(W)\;:=\;\Pr_{\{\mu_j\}}\!\Big[\rho^{(12)}_G \text{ is } (k_1,k_2)\text{-extendible}\Big]
\;\approx\;
\frac{N_{\mathrm{ext}}(W)}{N_{\mathrm{samp}}},
\end{equation}
where \(N_{\mathrm{ext}}(W)\) is the number of sampled disorder realizations, out of \(N_{\mathrm{samp}}\),
for which the reduced two-site state is \((k_1,k_2)\)-extendible.

The figure shows the empirical probability \(p(W)\) as a function of \(W\), for fixed \((k_1,k_2)\) and fixed
global system size \(N\).  We stress that extendibility is tested only for the reduced two-site state; the
role of the full system size \(N\) is merely to define the parent many-body Gaussian ground state from which
this local reduction is obtained.  Since the quantity under consideration is local, one expects the curve to
become essentially insensitive to \(N\) once \(N\) is sufficiently large.

The figure shows a smooth disorder-driven increase in the empirical extendible fraction. For \(W\leq 2\),
no extendible realization is observed among the \(50\,000\) samples at each plotted disorder strength,
whereas a small but nonzero fraction is already resolved at \(W=3\). The extendible fraction then increases
steadily with \(W\). Because the calculation is performed for a fixed finite chain and a finite Monte Carlo
sample, the apparent onset should be interpreted as a crossover in an ensemble probability rather than as
evidence for a sharp thermodynamic phase transition.

This trend is consistent with the interpretation of extendibility as a constraint on the strength and
structure of bipartite correlations. Increasing disorder tends to suppress effective cross-correlations
across a fixed local cut, thereby making the reduced two-site state progressively more likely to satisfy the
\((k_1,k_2)\)-extendibility conditions. In this way, the calculation provides a concrete illustration of how
the covariance-matrix extendibility criterion developed in the main text can be applied to local reduced
states in disordered fermionic Gaussian systems.

\end{document}